\let\origrmdefault\rmdefault
\renewcommand{\rmdefault}{\origrmdefault}
\providecommand{\tabularnewline}{\\}
\numberwithin{equation}{section}
\theoremstyle{plain}
\newtheorem{lem}{\protect\lemmaname}
\theoremstyle{plain}
\newtheorem{cor}{\protect\corollaryname}
\theoremstyle{plain}
\newtheorem{thm}{\protect\theoremname}
\theoremstyle{definition}
\newtheorem{defn}{\protect\definitionname}
\theoremstyle{remark}
\newtheorem{rem}{\protect\remarkname}
\theoremstyle{plain}
\newtheorem{prop}{\protect\propositionname}
\providecommand{\corollaryname}{Corollary}
\providecommand{\definitionname}{Definition}
\providecommand{\lemmaname}{Lemma}
\providecommand{\propositionname}{Proposition}
\providecommand{\remarkname}{Remark}
\providecommand{\theoremname}{Theorem}
\begin{document}
\global\long\def\set#1#2{\left\{  #1\, |\, #2\right\}  }%

\global\long\def\cyc#1{\mathbb{Q}\!\left[\zeta_{#1}\right]}%

\global\long\def\mat#1#2#3#4{\left(\begin{array}{cc}
#1 & #2\\
#3 & #4
\end{array}\right)}%

\global\long\def\Mod#1#2#3{#1\equiv#2\, \left(\mathrm{mod}\, \, #3\right)}%

\global\long\def\inv{^{\,\textrm{-}1}}%

\global\long\def\pd#1{#1^{+}}%

\global\long\def\sym#1{\mathbb{S}_{#1}}%

\global\long\def\fix#1{\mathtt{Fix}\!\left(#1\right)}%

\global\long\def\map#1#2#3{#1\!:\!#2\!\rightarrow\!#3}%

\global\long\def\Map#1#2#3#4#5{\begin{split}#1:#2  &  \rightarrow#3\\
 #4  &  \mapsto#5 
\end{split}
 }%

\global\long\def\fact#1#2{#1\slash#2}%

\global\long\def\Gal#1{\mathtt{Gal}\!\left(#1\right)}%

\global\long\def\fixf#1{\mathbb{Q}\!\left(#1\right)}%

\global\long\def\gl#1#2{\mathsf{GL}_{#2}\!\left(#1\right)}%

\global\long\def\SL{\mathrm{SL}_{2}\!\left(\mathbb{Z}\right)}%

\global\long\def\zn#1{\left(\mathbb{Z}/\!#1\mathbb{Z}\right)^{\times}}%

\global\long\def\sn#1{\mathbb{S}_{#1}}%

\global\long\def\fix#1#2{\mathrm{Fix}_{#1}\left(#2\right)}%

\global\long\def\aut#1{\mathrm{Aut\mathit{\left(#1\right)}}}%

\global\long\def\FA#1{#1^{\prime}}%

\global\long\def\norm#1{\vert#1\vert}%

\global\long\def\ugr#1{#1^{{\scriptscriptstyle \times}}}%

\global\long\def\ipm#1{#1_{{\scriptscriptstyle \pm}}}%

\global\long\def\stab#1#2{#1_{#2}}%

\global\long\def\cent#1#2{\mathtt{C}_{#1}\!\left(#2\right)}%

\global\long\def\om#1{\omega\!\left(#1\right)}%

\global\long\def\sp#1#2{\bigl\langle#1,#2\bigr\rangle}%

\global\long\def\irr#1{\mathtt{Irr}\!\left(#1\right)}%

\global\long\def\sqf#1{#1^{\circ}}%

\global\long\def\duc#1{#1^{{\scriptscriptstyle \perp}}}%

\global\long\def\latg#1{\mathbb{L}\!\left(#1\right)}%

\global\long\def\id{\boldsymbol{\mathfrak{1}}}%

\global\long\def\res#1#2{\mathtt{res}_{#2}^{#1}}%

\global\long\def\rel{\boldsymbol{\nabla}}%
\global\long\def\drel{\boldsymbol{\intercal}}%

\global\long\def\ve{\mathtt{V}}%

\global\long\def\du#1{\boldsymbol{\rel}\!\!\left(#1\right)}%

\global\long\def\star#1{\boldsymbol{\rel}\!\!\left(#1\right)}%

\global\long\def\pcl#1{#1^{\flat}}%

\global\long\def\defl{\partial}%

\global\long\def\cl#1{\boldsymbol{\Updelta}\!\left(#1\right)}%

\global\long\def\ucl#1{\boldsymbol{\varDelta}_{#1}^{\times}}%

\global\long\def\ul#1{\ve_{#1}}%
\global\long\def\mul#1{\mathcal{M}_{#1}}%

\global\long\def\ecl{\mathcal{E}}%

\global\long\def\fcl{\mathcal{\mathscr{L}}}%

\global\long\def\twg{\textrm{twist group}}%

\global\long\def\PCL{\textrm{equilocality class}}%
\global\long\def\PCS{\textrm{neighborhood}}%

\global\long\def\decl{\textrm{deconstruction lattice}}%

\global\long\def\ifc{\mathfrak{I}}%

\global\long\def\FD#1{#1^{\circ}}%

\global\long\def\FE#1{\bigvee#1}%

\global\long\def\FF{}%

\global\long\def\FI{}%

\title{Graphs, lattices and deconstruction hierarchies}
\author{P. Bantay}
\maketitle
\begin{center}
Institute for Theoretical Physics\\
E\"{o}tv\"{o}s Lor{\' a}nd University, Budapest
\par\end{center}
\begin{abstract}
The mathematics underlying the connection between deconstruction lattices
and locality diagrams of conformal models is developed from scratch,
with special emphasis on classification issues. In particular, the
notions of equilocality classes, deflation map, essential vertices
and stem graphs are introduced in order to characterize those graphs
that may arise as locality diagrams. 
\end{abstract}

\section{Introduction}

An important construction procedure in 2D Conformal Field Theory \cite{BPZ,DiFrancesco-Mathieu-Senechal},
leading to new consistent models from known ones, is orbifolding \cite{Dixon_orbifoldCFT,DV3},
which involves identifying states related by some group of symmetries
(the 'twist group'), supplemented by the introduction of new states
(grouped into so-called 'twisted sectors' labeled by the conjugacy
classes of the twist group) to ensure modular invariance. Orbifold
models have found different applications over the years, ranging from
string model building \cite{Dixon_orbifolds1,Dixon_orbifolds2} (based
on the observation that the CFT describing the world-sheet dynamics
of strings moving on a quotient of Minkowski space by some group action
can be obtained in such a way) to the description of second-quantized
strings (via so-called symmetric products orbifolds) \cite{elliptic_genera,Dijkgraaf_disctors,Bantay2003a},
not to mention their role in the celebrated FLM construction of the
Moonshine module \cite{FLM1}, and the (physicist) proof of the congruence
subgroup property of rational conformal models \cite{Bantay2003b}.

Orbifold deconstruction \cite{Bantay2019a}, the inverse procedure
of orbifolding, aims at recognizing whether a given conformal model
could be realized as an orbifold of another one, and if so, to determine
this original model and the relevant $\twg$. That orbifold deconstruction
does not only make sense, but that it can be performed effectively
has been explained in \cite{Bantay2019a,Bantay2020}: starting from
some readily available data (conformal weights, fusion rules and conformal
characters) characterizing a conformal model, one can identify all
possible realizations of the model as an orbifold, possibly up to
some finite (usually rather small) ambiguity that can, as a last resort,
be resolved through a case-by-case analysis. Of course, the computational
need grows steadily with the complexity of the model, but judicious
algorithms allow to deal with models having as much as several hundreds
primary fields.

An interesting aspect of the theory is that there is actually a whole
hierarchy of orbifold deconstructions \cite{Bantay2020a}. This is
natural to some extent: indeed, if a model may be obtained from another
one by orbifolding with respect to some twist group, it may also be
reached in steps, by first orbifolding with respect to a normal subgroup
of the twist group, and then orbifolding the result by the corresponding
factor group. This means that for any deconstruction with a given
$\twg$ one has a full hierarchy of partial deconstructions corresponding
to the different factor groups of the $\twg$, and these form a lattice
isomorphic to the lattice of normal subgroups of the latter. Actually,
the situation is more subtle, because in general one and the same
model can be obtained in genuinely different ways as an orbifold\footnote{That this behavior is the rule rather than the exception is clearly
exemplified by the FLM construction \cite{FLM1} of the Moonshine
module from the Leech lattice model, in which one first constructs
a $\mathbb{Z}_{2}$-orbifold of the latter, and then deconstructs
this holomorphic orbifold: as it turns out, there are two different
non-trivial deconstructions, one giving back (as expected) the Leech
lattice model, while the other results in the Moonshine module.}, hence the different deconstructions of a given model do not form
a lattice, but only some more general kind of ordered structure \cite{Bantay2020a}. 

A major result of \cite{Bantay2020a} is that to each conformal model
is naturally associated an algebraic lattice (with nice properties,
like being modular, which is crucial in view of the group theoretic
interpretation) into which one can embed the whole deconstruction
hierarchy, with the actual deconstructions corresponding to special
lattice elements, the so-called 'twisters' of the model \cite{Bantay2019a}.
Most importantly for us, this '$\decl$' is self-dual, i.e. it comes
equipped with an order-reversing involutory self-map. From a vantage
point, it is fair to say that the $\decl$ is an important combinatorial
characteristic of the conformal model.

Unfortunately, determining the $\decl$ from scratch can be pretty
involved, as one should consider all subsets of primaries, and filter
out those that are closed under the fusion product, a procedure whose
computational cost grows exponentially with the number of primaries,
prohibiting actual computations for models with more than a couple
of dozens of them, while interesting examples involve hundreds, if
not thousands. Fortunately, as explained in \cite{Bantay2021a}, there
is a way out, exploiting the   relationship between self-dual lattices
and undirected graphs (with possible loops): to each graph corresponds
a well-defined self-dual lattice, its 'associated lattice', and there
is  an undirected graph naturally related to any given conformal
model, its so-called 'locality graph' (whose vertices correspond to
the primary fields, with two of them adjacent if the corresponding
primaries are mutually local), whose associated lattice can be shown
to coincide with the $\decl$ $\fcl$. This observation makes it much
more easy to determine the $\decl$, by first considering the locality
graph, and then computing its associated lattice. What is more, by
using the natural isomorphism between the associated lattice of a
graph and that of its deflation, the computations become even simpler. 

The present paper aims at giving a mathematically sound presentation
of the above ideas. We start by describing the basic facts on the
relation between undirected graphs and their associated lattices,
and discuss such notions as equilocality classes and the deflation
map. We then show, by introducing 'duality graphs', that every self-dual
lattice is indeed the associated lattice of some undirected graph.
Next, we turn to the question of describing all (irreducible) graphs
with isomorphic associated lattices: this involves such notions as
essential vertices and the radical of a graph. Finally, we present
the physics motivation behind all the foregoing, by showing that the
associated lattice of the locality graph does indeed coincide with
the $\decl$ of the conformal model, and expound on some of the most
interesting consequences of this result.

\global\long\def\set#1#2{\left\{  #1\, |\, #2\right\}  }%

\global\long\def\cyc#1{\mathbb{Q}\!\left[\zeta_{#1}\right]}%

\global\long\def\mat#1#2#3#4{\left(\begin{array}{cc}
#1 & #2\\
#3 & #4
\end{array}\right)}%

\global\long\def\Mod#1#2#3{#1\equiv#2\, \left(\mathrm{mod}\, \, #3\right)}%

\global\long\def\inv{^{\,\textrm{-}1}}%

\global\long\def\pd#1{#1^{+}}%

\global\long\def\sym#1{\mathbb{S}_{#1}}%

\global\long\def\fix#1{\mathtt{Fix}\!\left(#1\right)}%

\global\long\def\map#1#2#3{#1\!:\!#2\!\rightarrow\!#3}%

\global\long\def\Map#1#2#3#4#5{\begin{split}#1:#2  &  \rightarrow#3\\
 #4  &  \mapsto#5 
\end{split}
 }%

\global\long\def\fact#1#2{#1\slash#2}%

\global\long\def\Gal#1{\mathtt{Gal}\!\left(#1\right)}%

\global\long\def\fixf#1{\mathbb{Q}\!\left(#1\right)}%

\global\long\def\gl#1#2{\mathsf{GL}_{#2}\!\left(#1\right)}%

\global\long\def\SL{\mathrm{SL}_{2}\!\left(\mathbb{Z}\right)}%

\global\long\def\zn#1{\left(\mathbb{Z}/\!#1\mathbb{Z}\right)^{\times}}%

\global\long\def\sn#1{\mathbb{S}_{#1}}%

\global\long\def\fix#1#2{\mathrm{Fix}_{#1}\left(#2\right)}%

\global\long\def\aut#1{\mathrm{Aut\mathit{\left(#1\right)}}}%

\global\long\def\FA#1{#1^{\prime}}%

\global\long\def\norm#1{\vert#1\vert}%

\global\long\def\ugr#1{#1^{{\scriptscriptstyle \times}}}%

\global\long\def\ipm#1{#1_{{\scriptscriptstyle \pm}}}%

\global\long\def\stab#1#2{#1_{#2}}%

\global\long\def\cent#1#2{\mathtt{C}_{#1}\!\left(#2\right)}%

\global\long\def\om#1{\omega\!\left(#1\right)}%

\global\long\def\sp#1#2{\bigl\langle#1,#2\bigr\rangle}%

\global\long\def\irr#1{\mathtt{Irr}\!\left(#1\right)}%

\global\long\def\sqf#1{#1^{\circ}}%

\global\long\def\duc#1{#1^{{\scriptscriptstyle \perp}}}%

\global\long\def\latg#1{\mathbb{L}\bigl(\!#1\!\bigr)}%

\global\long\def\id{\boldsymbol{\mathfrak{1}}}%

\global\long\def\res#1#2{\mathtt{res}_{#2}^{#1}}%

\global\long\def\rel{\boldsymbol{\nabla}}%
\global\long\def\drel{\boldsymbol{\intercal}}%

\global\long\def\ve{\mathtt{V}}%

\global\long\def\du#1{\boldsymbol{\rel}\!\!\left(#1\right)}%

\global\long\def\star#1{\boldsymbol{\rel}\!\!\left(#1\right)}%

\global\long\def\pcl#1{#1^{\flat}}%

\global\long\def\defl{\partial}%

\global\long\def\cl#1{\boldsymbol{\Updelta}\!\left(#1\right)}%

\global\long\def\ucl#1{\boldsymbol{\varDelta}_{#1}^{\times}}%

\global\long\def\ul#1{\ve_{#1}}%
\global\long\def\mul#1{\mathcal{M}_{#1}}%

\global\long\def\ecl{\mathscr{E}}%

\global\long\def\PCL{\textrm{equilocality class}}%
\global\long\def\PCS{\textrm{neighborhood}}%

\global\long\def\FD#1{#1^{\circ}}%

\global\long\def\FE#1{\bigvee#1}%

\global\long\def\FF{}%

\global\long\def\FI{}%

\section{Graphs and their lattices\label{sec:Graphs-and}}

Let $\rel$ denote an undirected graph \cite{Balakrishnan1997,Bollobas2002}
with (finite) vertex set $\ve$. To each vertex $x\!\in\!\ve$ is
associated its neighborhood $\star x$, the set of all vertices adjacent
to it. More generally, for any subset $X\!\subseteq\!\ve$ its dual
\begin{equation}
\du X=\set{x\!\in\!\ve}{X\!\subseteq\!\star x}={\displaystyle \bigcap_{x\in X}}\star x\label{eq:dualdef}
\end{equation}
is the set of all vertices that are adjacent to each element of $X$.
\begin{lem}
\label{lem:dualsym}For any $X,Y\!\subseteq\!\ve$ one has $X\!\subseteq\!\du Y$
iff $Y\!\subseteq\!\du X$.
\end{lem}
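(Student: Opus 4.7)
The plan is to unfold the definition of the dual operator and then appeal to the fact that adjacency in $\rel$ is a symmetric relation. The statement is essentially a reformulation of that symmetry, once both sides are expanded to bi-quantified conditions over $X\times Y$, so I expect no genuine obstacle; the work is purely in making the rewriting precise.

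More concretely, I would first observe that, by the defining equation \eqref{eq:dualdef}, the inclusion $X\subseteq\du Y$ says that every $x\!\in\!X$ belongs to $\du Y$, which in turn means (using the leftmost equality in \eqref{eq:dualdef}) that $Y\subseteq\star x$ for every $x\!\in\!X$. In other words, $X\subseteq\du Y$ is equivalent to the symmetric-looking condition
\begin{equation*}
\forall x\!\in\!X,\ \forall y\!\in\!Y:\quad y\!\in\!\star x,
\end{equation*}
i.e.\ $x$ and $y$ are adjacent in $\rel$ for every pair $(x,y)\!\in\!X\!\times\!Y$.

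Because $\rel$ is undirected, $y\!\in\!\star x$ holds iff $x\!\in\!\star y$, so the above double-quantified condition is unchanged under swapping the roles of $x$ and $y$. Running the same unfolding backwards starting from $Y$ rather than $X$, it reads: every $y\!\in\!Y$ satisfies $X\subseteq\star y$, which is exactly the assertion that every $y\!\in\!Y$ belongs to $\du X$, i.e.\ $Y\subseteq\du X$. Chaining the two equivalences yields the lemma. The only ingredient beyond pure symbol-pushing is the symmetry of $\rel$, which is precisely the hypothesis that the graph is undirected; no further auxiliary lemma is needed.
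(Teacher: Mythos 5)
Your proposal is correct and follows essentially the same route as the paper: unfold the definition of $\du{\cdot}$ to the bi-quantified adjacency condition over $X\times Y$ and invoke the symmetry of the adjacency relation. The paper's proof is just a more condensed version of exactly this argument.
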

\begin{proof}
$X\!\subseteq\!\du Y$ means that all vertices in $X$ are adjacent
to all vertices in $Y$, i.e. $Y\!\subseteq\!\du X$ by the symmetry
of the adjacency relation.
\end{proof}
\begin{lem}
\begin{singlespace}
\label{lem:dualmeet}$\du{X\!\cup\!Y}\!=\!\du X\cap\du Y$ for $X,Y\!\subseteq\!\ve$.
\end{singlespace}
\end{lem}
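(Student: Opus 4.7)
The plan is to prove the identity by a direct unfolding of the definition~\eqref{eq:dualdef}, noting that the whole content of the lemma is the elementary set-theoretic fact that an intersection over a union of index sets decomposes as the intersection of the two partial intersections. Concretely, I would write
\[
\du{X\cup Y} \;=\; \bigcap_{z\in X\cup Y}\star z \;=\; \Bigl(\bigcap_{z\in X}\star z\Bigr)\cap\Bigl(\bigcap_{z\in Y}\star z\Bigr) \;=\; \du X\cap\du Y,
\]
where the first and third equalities are just \eqref{eq:dualdef} and the middle one is the standard identity for intersections indexed over a union. This is essentially a one-line proof.

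An alternative, slightly more conceptual route would deduce the identity from Lemma~\ref{lem:dualsym} together with the fact that a subset of $\ve$ is determined by the collection of sets it contains (equivalently, is contained in). For every $Z\!\subseteq\!\ve$, one has the chain of equivalences: $Z\!\subseteq\!\du{X\!\cup\!Y}$ iff $X\!\cup\!Y\!\subseteq\!\du Z$ iff both $X\!\subseteq\!\du Z$ and $Y\!\subseteq\!\du Z$ iff $Z\!\subseteq\!\du X$ and $Z\!\subseteq\!\du Y$ iff $Z\!\subseteq\!\du X\cap\du Y$. Since this holds for every $Z$, the two sides must coincide. I consider this route less economical but more in line with the Galois-style flavor that Lemma~\ref{lem:dualsym} brings in and that I expect to pervade the rest of the section.

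I anticipate no real obstacle: the statement is a straightforward definitional unwinding, and the only care to take is not to introduce extra machinery or unnecessary pointwise reasoning about adjacency. I would therefore present the short direct proof and move on, reserving the Lemma~\ref{lem:dualsym}-based argument only if the author wished to emphasize the symmetric/Galois-theoretic point of view.
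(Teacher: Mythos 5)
Your direct proof is correct and is essentially the paper's own argument: the paper unwinds the membership form of \eqref{eq:dualdef} pointwise ($x\!\in\!\du{X\!\cup\!Y}$ iff $X\!\cup\!Y\!\subseteq\!\star x$ iff $x\!\in\!\du X$ and $x\!\in\!\du Y$), which is just the elementwise phrasing of your indexed-intersection identity. The alternative route via Lemma~\ref{lem:dualsym} is also valid but, as you note, unnecessary here.
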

\begin{proof}
$x\!\in\!\du{X\!\cup\!Y}$ iff $X\!\cup\!Y\!\subseteq\!\star x$,
i.e. $X\!\subseteq\!\star x$ and $Y\!\subseteq\!\star x$, and this
is equivalent to $x\!\in\!\du X$ and $x\!\in\!\du Y$, proving the
claim.
\end{proof}
\begin{lem}
\label{lem:dualorder} The assignment $X\!\mapsto\!\du X$ is order-reversing,
i.e. $X\!\subseteq\!Y$ implies $\du Y\!\subseteq\!\du X$.
\end{lem}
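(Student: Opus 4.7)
The plan is to derive the claim as an immediate corollary of Lemma \ref{lem:dualmeet}, avoiding any direct appeal to vertex adjacency. The key trick is to encode inclusion through the union operation: $X \subseteq Y$ is equivalent to $X \cup Y = Y$. Feeding this into the meet identity supplied by Lemma \ref{lem:dualmeet} gives $\du Y = \du{X \cup Y} = \du X \cap \du Y$, and the right-hand side is manifestly contained in $\du X$, which is precisely $\du Y \subseteq \du X$ as required.

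An equally painless alternative goes directly through the defining formula \eqref{eq:dualdef}, namely $\du X = \bigcap_{x \in X} \star x$: intersections are anti-monotone in their index set, so enlarging the indexing family from $X$ to $Y$ can only shrink the resulting intersection. There is no genuine obstacle here; the lemma is a formal consequence of the framework already erected, and the only judgement call is whether to route the argument through the meet identity of Lemma \ref{lem:dualmeet} or through the underlying intersection representation. I would choose the former presentation, since it keeps the proof purely lattice-theoretic and parallel in style to the two preceding lemmas.
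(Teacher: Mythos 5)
Your primary argument is exactly the paper's proof: encode $X\subseteq Y$ as $X\cup Y=Y$ and apply Lemma \ref{lem:dualmeet} to get $\du Y=\du X\cap\du Y\subseteq\du X$. The proposal is correct and takes essentially the same approach as the paper.
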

\begin{proof}
One has $X\!\subseteq\!Y$ iff $X\!\cup\!Y\!=\!Y$, and by \prettyref{lem:dualmeet}
this implies $\du Y\!=\!\du{X\!\cup\!Y}\!=\!\du X\cap\du Y$, that
is $\du Y\!\subseteq\!\du X$.
\end{proof}
\begin{lem}
\label{lem:closure}For a subset $X\!\subseteq\!\ve$ let
\begin{equation}
\cl X\!=\!\du{\du X}\!=\!\set{x\!\in\!\ve}{\du X\!\subseteq\!\star x}\label{eq:cldef}
\end{equation}
Then
\begin{align}
X\! & \subseteq\!\cl X\label{eq:clext}\\
\cl{\du X} & \!=\!\du{\cl X}\!=\!\du X\label{eq:cldu}\\
\cl X\! & \subseteq\!\cl Y\label{eq:clord}
\end{align}
 for $X\!\subseteq\!Y\!\subseteq\!\ve$.
\end{lem}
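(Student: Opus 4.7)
The plan is to derive all three properties formally from the Galois connection established in Lemmas~\ref{lem:dualsym} and \ref{lem:dualorder}, applying the duality map $\du{\cdot}$ twice in various configurations. None of the steps should require any fresh idea beyond symmetry and order-reversal; the role of this lemma is to package the standard ``closure from a Galois connection'' facts in the present notation.

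For the extension property \eqref{eq:clext}, I would argue as follows: by Lemma~\ref{lem:dualsym}, the inclusion $X \subseteq \du{\du X}$ is equivalent to $\du X \subseteq \du X$, which holds trivially. Hence $X \subseteq \cl X$.

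For the fixed-point identities \eqref{eq:cldu}, the equality $\cl{\du X} = \du{\cl X}$ is immediate from unfolding the definition, since both sides equal $\du{\du{\du X}}$. To show these coincide with $\du X$, I would sandwich: applying \eqref{eq:clext} to the subset $\du X$ gives $\du X \subseteq \cl{\du X}$; conversely, \eqref{eq:clext} applied to $X$ yields $X \subseteq \cl X$, whence Lemma~\ref{lem:dualorder} gives $\du{\cl X} \subseteq \du X$. Combining these with the trivial equality $\cl{\du X} = \du{\cl X}$ pinches the chain to $\du X$.

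Finally, for the monotonicity \eqref{eq:clord}, I would invoke Lemma~\ref{lem:dualorder} twice: $X \subseteq Y$ first yields $\du Y \subseteq \du X$, and a second application produces $\du{\du X} \subseteq \du{\du Y}$, i.e.\ $\cl X \subseteq \cl Y$. The only thing to watch out for is notational — keeping track of which inclusion is being reversed at each step — but there is no real obstacle, since the entire argument is the formal calculus of a Galois connection applied to the adjacency relation.
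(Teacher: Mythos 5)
Your proof is correct and follows essentially the same route as the paper's own argument: \eqref{eq:clext} via Lemma~\ref{lem:dualsym} with $Y=\du X$, \eqref{eq:cldu} by the sandwich $\du X\subseteq\cl{\du X}=\du{\cl X}\subseteq\du X$ using associativity of composition and Lemma~\ref{lem:dualorder}, and \eqref{eq:clord} by applying Lemma~\ref{lem:dualorder} twice. No issues.
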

\begin{proof}
\prettyref{eq:clext} follows from \prettyref{lem:dualsym} with $Y\!=\!\du X$.
As to \prettyref{eq:cldu}, substituting $X$ by $\du X$ in \prettyref{eq:clext}
gives $\du X\!\subseteq\!\cl{\du X}$, while applying $\rel$ to both
sides of it and using \prettyref{lem:dualorder} leads to $\du{\cl X}\!\subseteq\!\du X$.
Since the composition of mappings is associative $\du{\cl X}\!=\!\cl{\du X}$,
hence $\du X\!\subseteq\!\cl{\du X}\!=\!\du{\cl X}\!\subseteq\!\du X$,
proving \prettyref{eq:cldu}. Finally, \prettyref{eq:clord} is a
consequence of \prettyref{lem:dualorder} applied twice.
\end{proof}
\begin{cor}
\label{cor:closure}$\cl{\cl X}\!=\!\cl X$ for any $X\!\subseteq\!\ve$,
hence the assignment $X\!\mapsto\!\cl X$ is a closure operator on
subsets of $\ve$.
\end{cor}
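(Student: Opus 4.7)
The plan is to derive the idempotence $\cl{\cl X}=\cl X$ as an immediate consequence of \prettyref{eq:cldu}, and then package it with the other two properties from \prettyref{lem:closure} to conclude that $\cl{\cdot}$ satisfies the three defining axioms of a closure operator on the power set of $\ve$.

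First, I would recall that by definition $\cl Y = \du{\du Y}$, so applied to $Y = \cl X$ we get $\cl{\cl X} = \du{\du{\cl X}}$. The key input is the equality $\du{\cl X} = \du X$ already established in \prettyref{eq:cldu}. Applying $\du{\cdot}$ to both sides of this equality (which is a well-defined operation on subsets of $\ve$) yields $\du{\du{\cl X}} = \du{\du X}$, i.e.\ $\cl{\cl X} = \cl X$. This is really a one-line computation and should form the content of the proof.

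Finally, I would note that a closure operator on $\ve$ is, by definition, an extensive, monotone and idempotent self-map of the power set. Extensivity is \prettyref{eq:clext}, monotonicity is \prettyref{eq:clord}, and idempotence is the identity just established. Hence $X \mapsto \cl X$ is a closure operator, proving the corollary.

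I do not anticipate any real obstacle here: the heavy lifting was done in \prettyref{lem:closure}, and the corollary only exploits the fact that $\du{\cdot}$ stabilizes after one application, which is exactly the content of \prettyref{eq:cldu}.
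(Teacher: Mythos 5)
Your proof is correct and follows essentially the same route as the paper: both derive the idempotence $\cl{\cl X}\!=\!\cl X$ as a one-line consequence of \prettyref{eq:cldu} (you apply $\rel$ to the identity $\du{\cl X}\!=\!\du X$, while the paper substitutes $\du X$ for $X$ in that equation -- the same manipulation in slightly different clothing), and then both combine this with \prettyref{eq:clext} and \prettyref{eq:clord} to verify the three closure-operator axioms.
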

\begin{proof}
Substituting $X$ by $\du X$ in \prettyref{eq:cldu} gives $\cl{\cl X}\!=\!\cl X$,
and combined with Eqs.\eqref{eq:clext} and \eqref{eq:clord} this
proves that $X\!\mapsto\!\cl X$ is indeed a closure operator.
\end{proof}
Recall \cite{Gratzer2011} that a lattice $L$ is a partially ordered
set in which every collection $X\!\subseteq\!L$ of lattice elements
has a least upper bound (their join ${\displaystyle {\textstyle \bigvee}}X$)
and a greatest lower bound (their meet ${\displaystyle {\textstyle \bigwedge}}X$).
The lattice is bounded if it has a maximal and a minimal element,
and it is self-dual if there exists a map $a\mapsto\duc a$ of $L$
onto itself (the 'duality map'), which is involutive, i.e. $\duc{(\duc a)}\!=\!a$,
and order-reversing, i.e. $a\!\leq\!b$ implies $\duc b\!\leq\!\duc a$
for all $a,b\!\in\!L$. Note that in a self-dual lattice join and
meet are related by the duality map via de Morgan's law $\duc{\left(a\!\vee\!b\right)}\!=\!\duc a\!\wedge\!\duc b$%
. Two self-dual lattices $L_{1}$ and $L_{2}$ are isomorphic iff
there exists a lattice isomorphism (an order-preserving bijective
map whose inverse is also order-preserving) $\map{\phi}{L_{1}}{L_{2}}$
compatible with the respective duality maps, i.e. such that $\phi\!\left(\duc a\right)\!=\!\duc{\phi\!\left(a\right)}$
for all $a\!\in\!L_{1}$. In particular, the automorphism group $\aut L$
of the self-dual lattice $L$ consists of those lattice automorphisms
of $L$ that commute with the duality map.
\begin{thm}
\label{thm:graphlat}The collection $\latg{\rel}=\!\set{X\!\subseteq\!\ve}{\cl X\!=\!X}$
ordered by inclusion is a self-dual lattice, with duality map given
by $X\!\mapsto\!\du X$.
\end{thm}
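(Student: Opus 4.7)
The plan is to exploit Corollary \ref{cor:closure}, which asserts that $\cl{\cdot}$ is a closure operator, together with the identities collected in \prettyref{lem:closure}. Since $\latg{\rel}$ is by definition the collection of closed sets of this operator, the general theory of closure systems guarantees that it is a complete lattice, and I would simply spell this out.

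First, I would check that $\latg{\rel}$ is closed under arbitrary intersections. Given any family $\{X_{i}\}\!\subseteq\!\latg{\rel}$, one has $\bigcap_{i}X_{i}\!\subseteq\!X_{j}$ for every $j$, so \eqref{eq:clord} gives $\cl{\bigcap_{i}X_{i}}\!\subseteq\!\cl{X_{j}}\!=\!X_{j}$, hence $\cl{\bigcap_{i}X_{i}}\!\subseteq\!\bigcap_{i}X_{i}$; combined with \eqref{eq:clext} this yields equality. Since $\ve$ is trivially closed, $\latg{\rel}$ contains a top element and is closed under meets (taken as intersections), so it is a complete lattice in which the join of any subfamily is realized as $\cl{\bigcup_{i}X_{i}}$ (the meet of all closed supersets of $\bigcup_{i}X_{i}$, which by extensivity coincides with its closure).

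Next, I would verify that $X\!\mapsto\!\du X$ is a duality map on $\latg{\rel}$. By \eqref{eq:cldu} one has $\cl{\du X}\!=\!\du X$ for every $X\!\subseteq\!\ve$, so the map sends $\latg{\rel}$ into itself. For $X\!\in\!\latg{\rel}$ the identity $\du{\du X}\!=\!\cl X\!=\!X$ shows that it is involutive, and order-reversal is exactly \prettyref{lem:dualorder}.

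I do not expect a genuine obstacle: the substantive content has been packaged into Lemmas \ref{lem:dualsym}--\ref{lem:closure} and \prettyref{cor:closure}. The one subtle point worth flagging is that $\du X$ is automatically closed for an \emph{arbitrary} $X\!\subseteq\!\ve$ (not merely for $X\!\in\!\latg{\rel}$), since it is precisely this fact that makes the dualization well-defined as a self-map of $\latg{\rel}$ and turns involutivity on the closed sets into a one-line consequence of the definition of $\latg{\rel}$.
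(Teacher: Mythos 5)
Your proposal is correct and follows essentially the same route as the paper: both delegate the lattice structure to the general theory of closure operators (via Corollary \ref{cor:closure}) and then verify that $X\!\mapsto\!\du X$ is a well-defined, involutive, order-reversing self-map of $\latg{\rel}$ using \eqref{eq:cldu}, \eqref{eq:cldef} and Lemma \ref{lem:dualorder}. The extra details you spell out (closure under intersections, the observation that $\du X$ is closed for arbitrary $X$) are exactly the content the paper compresses into its citations.
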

\begin{proof}
That $\latg{\rel}$ is a lattice is a general feature of closure operators,
so what we have to prove is that the assignment $X\!\mapsto\!\du X$
is a duality map. That $\du X\!\in\!\latg{\rel}$ if $X\!\in\!\latg{\rel}$
follows from \prettyref{eq:cldu}, hence $X\!\mapsto\!\du X$ does
indeed map $\latg{\rel}$ into itself; it is order-reversing according
to \prettyref{lem:dualorder}, and involutive by \prettyref{eq:cldef}.
\end{proof}
The lattice $\latg{\rel}$ has as maximal element the set of all vertices,
while its minimal element is the set of those (so-called 'universal')
vertices that are adjacent to every vertex. The meet of $X,Y\!\in\!\latg{\rel}$
is their intersection $X\!\cap\!Y$, while their join can be expressed
using de Morgan's law as
\begin{equation}
X\vee Y=\du{\du X\cap\du Y}\label{eq:join}
\end{equation}

\begin{lem}
\label{lem:dualclosed}$\cl X\!=\!X$ iff $X\!=\!\du Y$ for some
$Y\!\subseteq\!\ve$, hence every closed set $X\!\in\!\latg{\rel}$
is an intersection of vertex-neighborhoods.
\end{lem}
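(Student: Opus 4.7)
The plan is to prove the biconditional by chasing definitions and invoking the identities already established in Lemma \ref{lem:closure}. For the forward direction, suppose $\cl X = X$. By the very definition $\cl X = \du{\du X}$ in \prettyref{eq:cldef}, we may simply take $Y = \du X$, so that $X = \cl X = \du{\du X} = \du Y$.

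For the reverse direction, suppose $X = \du Y$ for some $Y \subseteq \ve$. Applying \prettyref{eq:cldu} with $X$ replaced by $Y$ gives $\cl{\du Y} = \du Y$, and hence $\cl X = \cl{\du Y} = \du Y = X$. This shows $X \in \latg{\rel}$.

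Finally, for the concluding assertion: once we know every closed set $X$ has the form $X = \du Y$ for some $Y \subseteq \ve$, we invoke the explicit formula \prettyref{eq:dualdef}, namely
\begin{equation*}
\du Y = \bigcap_{y \in Y} \star y,
\end{equation*}
to conclude that $X$ is the intersection of the vertex-neighborhoods $\star y$ taken over $y \in Y$.

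I do not anticipate any serious obstacle here: the result is essentially a packaging of the identities $\cl X = \du{\du X}$ and $\cl{\du Y} = \du Y$ proved in Lemma \ref{lem:closure}, together with the definition of the dual operator. The only subtlety worth flagging is that the $Y$ witnessing $X = \du Y$ need not be unique, nor is it required to be closed; the canonical choice is $Y = \du X$, but any $Y$ with $\cl Y = \du X$ would do equally well.
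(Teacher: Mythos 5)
Your proof is correct and follows the paper's own argument essentially verbatim: the forward direction takes $Y=\du X$ via the definition $\cl X=\du{\du X}$, and the reverse direction applies \prettyref{eq:cldu} to get $\cl{\du Y}=\du Y$. The concluding appeal to \prettyref{eq:dualdef} for the intersection statement, and your closing remark on the non-uniqueness of $Y$, are both accurate.
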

\begin{proof}
If $\cl X\!=\!X$, then $X\!=\!\du Y$ holds with $Y\!=\!\du X$ according
to \prettyref{eq:cldef}. Conversely, $X\!=\!\du Y$ implies $\cl X\!=\!\cl{\du Y}\!=\!\du Y\!=\!X$
by \prettyref{eq:cldu}, proving the claim.
\end{proof}
\begin{lem}
\label{lem:dpc}The least element of $\latg{\rel}$ that contains
$x\!\in\!\ve$ is
\begin{equation}
\cl x\!=\!\!\set{y\!\in\!\ve}{\star x\!\subseteq\!\star y}\label{eq:dpc}
\end{equation}
Consequently,  for all $X\!\subseteq\!\ve$
\begin{equation}
\cl X=\bigvee_{x\in X}\cl x\label{eq:cljoin}
\end{equation}
\end{lem}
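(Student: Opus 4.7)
The plan is to compute $\cl x$ by unfolding the definitions once, then derive both halves of the statement from the already-established properties of $\rel$ and $\cl{\cdot}$.

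For the explicit formula, applying \prettyref{eq:dualdef} to the singleton $\{x\}$ gives $\du{\{x\}}=\star x$, and applying $\rel$ once more yields
\[
\cl x=\du{\star x}=\set{y\!\in\!\ve}{\star x\!\subseteq\!\star y},
\]
which is \prettyref{eq:dpc}. The set $\cl x$ belongs to $\latg{\rel}$ by \prettyref{lem:dualclosed} (it is the $\rel$-image of a subset of $\ve$) and contains $x$ by \prettyref{eq:clext}. To see that it is the \emph{least} element of $\latg{\rel}$ containing $x$, suppose $X\!\in\!\latg{\rel}$ with $x\!\in\!X$; from $\{x\}\!\subseteq\!X$ and \prettyref{eq:clord} one obtains $\cl x\!\subseteq\!\cl X\!=\!X$.

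The join formula \prettyref{eq:cljoin} then follows by a two-way inclusion. For $\bigvee_{x\in X}\cl x\!\subseteq\!\cl X$: for each $x\!\in\!X$, \prettyref{eq:clord} gives $\cl x\!\subseteq\!\cl X$, so $\cl X$ is an upper bound in $\latg{\rel}$ of the family $\{\cl x:x\!\in\!X\}$ and therefore dominates its join. For $\cl X\!\subseteq\!\bigvee_{x\in X}\cl x$: the right-hand side lies in $\latg{\rel}$ (hence is closed) and contains every $x\!\in\!X$ by the first part of the lemma, so it contains $X$; \prettyref{eq:clord} combined with its closedness then forces it to contain $\cl X$. No step poses a genuine obstacle; the whole proof is formal manipulation of the dual and closure maps, the only point requiring a moment of care being the notational distinction between the vertex $x$ and the singleton $\{x\}$ at the start, reconciled precisely by $\du{\{x\}}=\star x$.
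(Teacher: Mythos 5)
Your proof is correct. For the first half you follow essentially the same route as the paper: identify $\cl x$ with $\du{\star x}$ via \prettyref{eq:dualdef}, invoke \prettyref{lem:dualclosed} for membership in $\latg{\rel}$, and derive minimality from monotonicity of the closure (the paper phrases this step as $\du X\subseteq\star x$ followed by \prettyref{lem:dualorder}, which is the same argument one level lower). For the join formula \prettyref{eq:cljoin} you diverge: you argue by two-way inclusion using the universal property of the join as least upper bound among closed sets, whereas the paper gives a one-line direct computation, $\cl X=\du{\du X}=\rel\bigl(\bigcap_{x\in X}\star x\bigr)=\bigvee_{x\in X}\du{\star x}$, relying on de Morgan's law in the form \prettyref{eq:join}. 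Both are sound; the paper's version is shorter and makes the role of duality explicit, while yours is more robust in that it uses only the abstract characterization of join in a closure-operator lattice and the first half of the lemma, without needing \prettyref{eq:join} at all. The one point that deserved care --- distinguishing the vertex $x$ from the singleton $\{x\}$ so that $\du{\{x\}}=\star x$ --- you handle explicitly, which the paper leaves implicit.
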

\begin{proof}
It is clear that $x\!\in\!\cl x$, and because $\cl x$ equals $\du{\star x}$
according to \prettyref{eq:dualdef}, it belongs to $\latg{\rel}$
by \prettyref{lem:dualclosed}. On the other hand, if $X\!\in\!\latg{\rel}$
contains $x\!\in\!\ve$, then $\du X\!\subseteq\!\star x$ by \prettyref{eq:dualdef},
hence $\cl x\!\subseteq\!\cl X\!=\!X$ by \prettyref{lem:dualorder},
proving that $\cl x$ is indeed the least element of $\latg{\rel}$
that contains $x\!\in\!\ve$. As to \prettyref{eq:cljoin}, one has
\[
\cl X\!=\!\du{\du X}\!=\!\rel\Bigl(\bigcap_{x\in X}\!\star x\Bigr)\!=\!\bigvee_{x\in X}\du{\star x}\!=\!\bigvee_{x\in X}\cl x
\]
taking into account \prettyref{eq:join}.
\end{proof}

\begin{lem}
\label{lem:graphiso}An isomorphism between two graphs induces an
isomorphism between their associated lattices.
\end{lem}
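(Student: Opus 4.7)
The plan is to take a graph isomorphism $\map{\phi}{\ve_1}{\ve_2}$ between two graphs $\rel_1,\rel_2$ and verify that the set-theoretic map $X\!\mapsto\!\phi(X)$ induces a self-dual lattice isomorphism $\map{\Phi}{\latg{\rel_1}}{\latg{\rel_2}}$. The entire argument hinges on a single observation: since $\phi$ is a bijection preserving adjacency, it sends the neighborhood of any vertex $x$ (computed in $\rel_1$) bijectively onto the neighborhood of $\phi(x)$ (computed in $\rel_2$).

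Combined with the formula $\du X\!=\!\bigcap_{x\in X}\star x$ from \prettyref{eq:dualdef}, this observation immediately yields $\phi(\du X)\!=\!\du{\phi(X)}$ for every $X\!\subseteq\!\ve_1$, where the two dual operations are taken in the respective graphs. Applying this identity twice gives $\phi(\cl X)\!=\!\cl{\phi(X)}$, so $\phi$ sends closed sets to closed sets and $\Phi$ is well defined; running the same argument with $\phi^{-1}$, itself a graph isomorphism, shows that $\Phi$ is a bijection with inverse $X\!\mapsto\!\phi^{-1}(X)$.

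Order-preservation in both directions is immediate from $\phi$ and $\phi^{-1}$ preserving set inclusion, so $\Phi$ is already a lattice isomorphism; the intertwining relation $\Phi(\du X)\!=\!\du{\Phi(X)}$ established in the previous step says in addition that $\Phi$ commutes with the duality maps of $\latg{\rel_1}$ and $\latg{\rel_2}$, which promotes it to an isomorphism of self-dual lattices in the sense defined just before \prettyref{thm:graphlat}. The only step with any real content is the neighborhood identity, and that is essentially a restatement of what it means to be a graph isomorphism; I expect no genuine obstacle, with the remaining verifications being direct consequences of the lemmas already in place.
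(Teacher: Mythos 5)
Your proposal is correct and follows essentially the same route as the paper: both rest on the single identity $\phi\!\left(\rel_{1}\!\left(X\right)\right)\!=\!\rel_{2}\!\left(\phi\!\left(X\right)\right)$, from which well-definedness, bijectivity, order-preservation and compatibility with the duality maps all follow. The only cosmetic difference is that the paper invokes \prettyref{lem:dualclosed} to see that closed sets map to closed sets, whereas you apply the intertwining identity twice to get $\phi\!\left(\cl X\right)\!=\!\cl{\phi\!\left(X\right)}$ directly; the two are interchangeable.
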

\begin{proof}
The graphs $\rel_{1}$ and $\rel_{2}$ are isomorphic in case there
exists a one-to-one correspondence $\phi$ between their vertices
such that two vertices of $\rel_{1}$ are adjacent iff their images
are adjacent vertices of $\rel_{2}$, hence 
\begin{multline}
\phi\!\left(\rel_{1}\!\left(X\right)\right)\!=\!\set{\phi\!\left(y\right)}{y\!\in\!\rel_{1}\!\left(x\right)\textrm{ for all }x\!\in\!X}\\
\!=\!\set y{y\!\in\!\rel_{2}\!\left(\phi\!\left(x\right)\right)\textrm{ for all }x\!\in\!X}\!=\!\rel_{2}\!\left(\phi\!\left(X\right)\right)\label{eq:griso}
\end{multline}
for any set $X$ of vertices of $\rel_{1}$, implying that $\phi\!\left(X\right)$
belongs to $\latg{\rel_{2}}$ precisely when $X\!\in\!\latg{\rel_{1}}$
by \prettyref{lem:dualclosed}. As a result, $X\!\mapsto\!\phi\!\left(X\right)$
is a one-to-one order-preserving map from $\latg{\rel_{1}}$ to $\latg{\rel_{2}}$,
and it is compatible with the respective duality maps thanks to \prettyref{eq:griso}.
\end{proof}
As the above result shows, the isomorphism class of the self-dual
lattice $\latg{\rel}$ is completely determined by the isomorphism
class of the graph $\rel$, but the converse is far from being true:
there are infinitely many non-isomorphic graphs with isomorphic associated
lattices. To understand what happens, we need the following notion.
\begin{defn}
\label{def:equilocal}Two vertices of a graph are equilocal if their
vertex-neighborhoods coincide. 
\end{defn}
That is, the vertices $x,y\!\in\!\ve$ of the graph $\rel$ are equilocal
when $\star x\!=\!\star y$. Equilocality is clearly an equivalence
relation, whose equivalence classes (the '$\PCL$es') partition the
set of all vertices. Equilocality is compatible with adjacency in
the sense that if two vertices are adjacent, then every pair of vertices
from their respective $\PCL$es are also adjacent; put another way,
the $\PCL$es provide a modular partition \cite{Balakrishnan1997}
of the graph.
\begin{lem}
\label{lem:pcl}The neighborhood of a vertex is a union of $\PCL$es.
\end{lem}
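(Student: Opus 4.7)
The plan is to show that if a vertex $y$ lies in the neighborhood $\star v$ of some vertex $v$, then every vertex equilocal to $y$ also lies in $\star v$; this immediately implies that $\star v$ is saturated with respect to the equilocality equivalence relation, hence a union of equilocality classes.

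First I would unpack the definitions. By the symmetry of the adjacency relation (already used repeatedly, e.g.\ in \prettyref{lem:dualsym}), $y\!\in\!\star v$ is equivalent to $v\!\in\!\star y$. Now suppose $z$ is equilocal to $y$, meaning $\star y\!=\!\star z$ by \prettyref{def:equilocal}. Then $v\!\in\!\star y\!=\!\star z$, and invoking symmetry once more yields $z\!\in\!\star v$. Thus the entire equilocality class of $y$ is contained in $\star v$.

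Since this holds for every $y\!\in\!\star v$, the neighborhood $\star v$ is the union of the equilocality classes of its elements, proving the claim. The argument is essentially a one-line consequence of the symmetry of adjacency together with the definition of equilocality, so there is no real obstacle; the only thing to be mindful of is not to confuse the roles of $y$ and $v$ when invoking symmetry, since the statement $\star x\!=\!\star y$ concerns the outgoing neighborhoods of $x$ and $y$, whereas membership in $\star v$ concerns incoming adjacency to $v$.
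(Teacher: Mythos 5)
Your argument is correct and coincides with the paper's own proof: both show that $y\!\in\!\star v$ and $\star y\!=\!\star z$ imply $z\!\in\!\star v$ via the chain $y\!\in\!\star v$ iff $v\!\in\!\star y$ iff $v\!\in\!\star z$ iff $z\!\in\!\star v$, using only the symmetry of adjacency. Nothing to add.
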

\begin{proof}
We have to show that if $y\!\in\!\star x$ and $\du y\!=\!\du z$,
then $z\!\in\!\star x$. But this is immediate, since (by the symmetric
nature of the adjacency relation) $y\!\in\!\star x$ iff $x\!\in\!\star y$
iff $x\!\in\!\star z$ iff $z\!\in\!\star x$.
\end{proof}
\begin{cor}
\label{cor:ecldecomp}Any $X\!\in\!\latg{\rel}$ is a union of $\PCL$es.
\end{cor}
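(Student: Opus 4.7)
The plan is to combine the two previous results (\prettyref{lem:dualclosed} and \prettyref{lem:pcl}) in the obvious way. First, I would invoke \prettyref{lem:dualclosed} to write any $X\!\in\!\latg{\rel}$ as an intersection of vertex-neighborhoods, namely $X\!=\!\du{\du X}\!=\!\bigcap_{y\in\du X}\star y$.

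Next, I would show that an intersection of unions of $\PCL$es is itself a union of $\PCL$es. Concretely, if $C$ is an $\PCL$ and $C\!\cap\!X\!\neq\!\emptyset$, pick some $x\!\in\!C\!\cap\!X$; then for each $y\!\in\!\du X$ we have $x\!\in\!\star y$, and \prettyref{lem:pcl} forces the whole class $C$ into $\star y$. Intersecting over all $y\!\in\!\du X$ gives $C\!\subseteq\!X$, so every $\PCL$ meeting $X$ is contained in $X$, which is precisely the statement that $X$ is a union of $\PCL$es.

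There is essentially no obstacle here: the content is entirely packaged in the two cited lemmas, and the argument is just the standard observation that a property preserved under unions and stable under arbitrary intersections propagates from a generating family (vertex-neighborhoods) to the full closure system $\latg{\rel}$.
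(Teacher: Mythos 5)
Your proof is correct and follows exactly the route the paper intends: the paper's own proof simply says the corollary "follows at once" from Lemmas \ref{lem:dualclosed} and \ref{lem:pcl}, and your argument is the explicit unpacking of that one-liner (write $X$ as an intersection of vertex-neighborhoods, note each is a union of equilocality classes, and observe that being a union of classes is preserved under intersection). Nothing to add.
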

\begin{proof}
This follows at once from Lemmas \ref{lem:dualclosed} and \ref{lem:pcl}.
\end{proof}
\begin{defn}
\label{def:defl}The \emph{deflation} $\pcl{\rel}$ of an undirected
graph $\rel$ is the graph whose vertices are the $\PCL$es of $\rel$,
with two of them adjacent if their elements are adjacent in $\rel$.
\end{defn}
\begin{lem}
\label{lem:deflirr}The deflation $\pcl{\rel}$ is an irreducible
graph, i.e. each of its $\PCL$es contains precisely one vertex.
\end{lem}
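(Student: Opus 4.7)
The plan is to take two equilocality classes $C_{1},C_{2}$ of $\rel$ (regarded as vertices of $\pcl{\rel}$) that share the same neighborhood in $\pcl{\rel}$, and deduce $C_{1}\!=\!C_{2}$. After fixing representatives $x\!\in\!C_{1}$ and $y\!\in\!C_{2}$, the objective reduces to showing $\star x\!=\!\star y$, since this equality is precisely the statement that $x$ and $y$ lie in the same equilocality class of $\rel$, forcing $C_{1}\!=\!C_{2}$.

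To establish this, I would rely on \prettyref{lem:pcl}, which guarantees that $\star x$ is a union of equilocality classes of $\rel$. Combined with \prettyref{def:defl}, this gives the identification
\[
\star x=\bigcup\set{C}{C\text{ adjacent to }C_{1}\text{ in }\pcl{\rel}}
\]
and the analogous formula for $\star y$ with $C_{2}$ in place of $C_{1}$. The hypothesis that $C_{1}$ and $C_{2}$ have coinciding neighborhoods in $\pcl{\rel}$ then yields $\star x\!=\!\star y$ at once.

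The one point deserving careful attention is the case in which $C_{1}$ (or $C_{2}$) is self-adjacent in $\pcl{\rel}$, reflecting the presence of mutual adjacencies among members of the class in $\rel$. Here the hypothesis of equal neighborhoods in $\pcl{\rel}$ transfers the self-adjacency property between $C_{1}$ and $C_{2}$, so the union-of-classes description of $\star x$ and $\star y$ continues to match uniformly across this case as well. This piece of bookkeeping is really the only spot requiring care; the rest is a matter of unpacking the definitions, and I do not anticipate any substantial obstacle.
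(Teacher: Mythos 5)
Your proposal is correct and follows essentially the same route as the paper: both arguments identify the neighborhood $\star x$ of a representative $x\!\in\!C_{1}$ with the union of the classes adjacent to $C_{1}$ in $\pcl{\rel}$ (using \prettyref{lem:pcl} and the modularity of the partition), so that equal neighborhoods in the deflation force $\star x\!=\!\star y$ and hence $C_{1}\!=\!C_{2}$. The extra bookkeeping about self-adjacent classes is harmless and already absorbed by the uniform union-of-classes description.
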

\begin{proof}
The neighborhood in $\pcl{\rel}$ of an $\PCL$ $\ecl$ consists of
those $\PCL$es all of whose elements are adjacent to every element
of $\ecl$, hence the union of these classes gives the neighborhood
of any element of $\ecl$. This means that, should two $\PCL$es have
the same neighborhood in $\pcl{\rel}$, their respective elements
would also have equal neighborhoods in $\rel$, hence the two classes
would coincide.
\end{proof}
Clearly, the graph determines both its deflation and the collection
of its $\PCL$es, and conversely, the knowledge of the deflation and
of the $\PCL$es determines completely the graph. In this sense, the
study of arbitrary undirected graphs may be reduced to that of irreducible
ones that are isomorphic to their deflation.

For $X\!\in\!\latg{\rel}$, let's consider the set $\defl\!\left(X\right)$
of those $\PCL$es that are contained in $X$: this is well-defined
according to \prettyref{cor:ecldecomp}, and satisfies trivially ${\displaystyle {\textstyle \bigcup}}\!~\defl\!\left(X\right)\!=\!X$.
\begin{lem}
\label{lem:defldu}For any $X\!\in\!\latg{\rel}$ one has
\begin{equation}
\pcl{\rel}\!\left(\defl\!\left(X\right)\right)\!=\!\defl\!\left(\du X\right)\label{eq:defldu}
\end{equation}
\end{lem}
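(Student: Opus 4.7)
The plan is to unwind both sides to the same condition on equilocality classes. On the left, $\pcl{\rel}(\defl(X))$ is the dual, in the deflation graph, of the set $\defl(X)$ of equilocality classes sitting inside $X$; an equilocality class $\ecl$ belongs to $\pcl{\rel}(\defl(X))$ iff $\ecl$ is adjacent in $\pcl{\rel}$ to every $\ecl'\!\in\!\defl(X)$. By \prettyref{def:defl}, this adjacency means that each element of $\ecl$ is adjacent in $\rel$ to each element of $\ecl'$. On the right, $\defl(\du X)$ is the collection of equilocality classes contained in $\du X$, and this set is well-defined only because $\du X\!\in\!\latg{\rel}$ (a consequence of \prettyref{eq:cldu}) is a union of equilocality classes by \prettyref{cor:ecldecomp}.

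Next I would exploit the identity $X\!=\!\bigcup\defl(X)$, which is immediate from \prettyref{cor:ecldecomp}. Given an $\ecl$ in $\pcl{\rel}(\defl(X))$, every element of $\ecl$ is adjacent to every element of every $\ecl'\!\in\!\defl(X)$, hence to every element of $X$, so $\ecl\!\subseteq\!\du X$, placing $\ecl$ inside $\defl(\du X)$. For the reverse inclusion, suppose $\ecl\!\in\!\defl(\du X)$, so every $x\!\in\!\ecl$ is adjacent to every $y\!\in\!X$. Any $\ecl'\!\in\!\defl(X)$ lies inside $X$, so every $x\!\in\!\ecl$ is adjacent to every $y\!\in\!\ecl'$, which is precisely the condition for $\ecl$ to be adjacent to $\ecl'$ in $\pcl{\rel}$; thus $\ecl\!\in\!\pcl{\rel}(\defl(X))$.

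There is no real obstacle: the whole argument is a double application of the fact that adjacency in $\pcl{\rel}$ between two classes is the same as complete bipartite adjacency in $\rel$ between their members, together with the observation that $X$ and $\du X$ are both unions of equilocality classes. The only point deserving care is to invoke \prettyref{eq:cldu} to confirm $\du X\!\in\!\latg{\rel}$ before speaking of $\defl(\du X)$; once that is noted, the two inclusions are verified almost symmetrically by swapping the roles of $X$ and $\du X$.
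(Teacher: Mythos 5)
Your argument is correct and is essentially the paper's own proof: both rest on the observation that adjacency of two classes in $\pcl{\rel}$ amounts to complete adjacency in $\rel$ between their members, so that a class is dual to $\defl\!\left(X\right)$ in the deflation exactly when it is contained in $\du X$. The paper compresses this into a single chain of equivalences where you spell out the two inclusions, but the content is the same.
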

\begin{proof}
A class belongs to $\pcl{\rel}\!\left(\defl\!\left(X\right)\right)$
precisely if it is adjacent with every $\PCL$ contained in $\defl\!\left(X\right)$,
that is, if all of its elements are adjacent to every element of $X$,
i.e. if it is contained in $\du X$.
\end{proof}
\begin{cor}
\textup{$\defl\!\left(X\right)\!\in\!\latg{\,\pcl{\rel}\,}$ for every
}$X\!\in\!\latg{\rel}$.
\end{cor}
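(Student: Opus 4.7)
The plan is to show directly that $\defl(X)$ is a closed set in the associated lattice of the deflation, by applying \prettyref{lem:defldu} twice and invoking the hypothesis $\cl X\!=\!X$.

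More precisely, the route I would take is as follows. Since $X\!\in\!\latg{\rel}$, the very definition of $\latg{\rel}$ gives $\cl X\!=\!\du{\du X}\!=\!X$. Now by \prettyref{lem:defldu} applied to $X$ I get $\pcl{\rel}\!\left(\defl\!\left(X\right)\right)\!=\!\defl\!\left(\du X\right)$; and because $\du X$ itself belongs to $\latg{\rel}$ (this is \prettyref{eq:cldu}), I may apply \prettyref{lem:defldu} once more, this time to $\du X$, obtaining
\[
\pcl{\rel}\!\left(\pcl{\rel}\!\left(\defl\!\left(X\right)\right)\right)\!=\!\pcl{\rel}\!\left(\defl\!\left(\du X\right)\right)\!=\!\defl\!\left(\du{\du X}\right)\!=\!\defl\!\left(\cl X\right)\!=\!\defl\!\left(X\right).
\]
In the lattice $\latg{\,\pcl{\rel}\,}$ the closure operator is precisely the double application of the duality map in the deflated graph, so the above chain of equalities says that $\defl(X)$ is its own closure, hence closed, hence a member of $\latg{\,\pcl{\rel}\,}$.

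The only point needing a moment of care is the implicit use of \prettyref{lem:dualclosed}: to conclude $\defl(X)\!\in\!\latg{\,\pcl{\rel}\,}$ one can equivalently note that $\defl(X)\!=\!\pcl{\rel}(\defl(\du X))$ exhibits $\defl(X)$ as the dual of a set of $\PCL$es in $\pcl{\rel}$, which by \prettyref{lem:dualclosed} (applied in the deflated graph) is enough for membership in $\latg{\,\pcl{\rel}\,}$. I do not anticipate any real obstacle here, as \prettyref{lem:defldu} has already done all the combinatorial work; the corollary is essentially a bookkeeping consequence of that lemma together with closedness of $X$.
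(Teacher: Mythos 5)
Your proposal is correct and follows essentially the same route as the paper: the paper's one-line proof is exactly your closing observation that $\defl\!\left(X\right)\!=\!\pcl{\rel}\!\left(\defl\!\left(\du X\right)\right)$ (Lemma \ref{lem:defldu} applied to $\du X$, using $\du{\du X}\!=\!X$) together with Lemma \ref{lem:dualclosed} in the deflated graph. Your main chain, which verifies $\pcl{\rel}\!\left(\pcl{\rel}\!\left(\defl\!\left(X\right)\right)\right)\!=\!\defl\!\left(X\right)$ directly, is just a mild unwinding of the same argument.
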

\begin{proof}
This follows from \prettyref{lem:dualclosed}, since $\defl\!\left(X\right)\!=\!\pcl{\rel}\!\left(\defl\!\left(\du X\right)\right)$
for any $X\!\in\!\latg{\rel}$ according to \prettyref{lem:defldu}.
\end{proof}
\begin{thm}
\label{thm:defliso}The map\textup{ $X\!\mapsto\!\defl\!\left(X\right)$}
that assigns to each $X\!\in\!\latg{\rel}$ the collection of all
$\PCL$es contained in it is an isomorphism between the self-dual
lattices $\latg{\rel}$ and $\latg{\pcl{\,\rel}\,}$.
\end{thm}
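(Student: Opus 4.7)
The plan is to exhibit an explicit two-sided inverse $\rho:\latg{\pcl{\rel}}\!\to\!\latg{\rel}$ for $\defl$, namely $\rho(\mathcal{F})\!=\!\bigcup\mathcal{F}$, the union of all equilocality classes constituting $\mathcal{F}$, and then verify the three remaining properties: that $\defl$ and $\rho$ are mutually inverse, that both are order-preserving, and that $\defl$ intertwines the duality maps on $\latg{\rel}$ and $\latg{\pcl{\rel}}$.

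First I would establish well-definedness of $\rho$, i.e. that $\bigcup\mathcal{F}\!\in\!\latg{\rel}$ whenever $\mathcal{F}\!\in\!\latg{\pcl{\rel}}$. By \prettyref{lem:dualclosed} applied to $\pcl{\rel}$, write $\mathcal{F}\!=\!\pcl{\rel}(\mathcal{G})$ for some collection $\mathcal{G}$ of $\PCL$es, and let $Y\!=\!\bigcup\mathcal{G}$. A vertex $x$ belongs to $\du Y$ exactly when it is adjacent to every element of $Y$, which, by the definition of adjacency in $\pcl{\rel}$, is equivalent to the $\PCL$ of $x$ being adjacent to every class in $\mathcal{G}$, i.e. to lying in $\pcl{\rel}(\mathcal{G})\!=\!\mathcal{F}$. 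Hence $\bigcup\mathcal{F}\!=\!\du Y\!\in\!\latg{\rel}$ by \prettyref{lem:dualclosed}.

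Next I would check that $\defl$ and $\rho$ are mutually inverse. The identity $\rho(\defl(X))\!=\!X$ for $X\!\in\!\latg{\rel}$ is just the statement that $X$ is the union of its constituent $\PCL$es, which is \prettyref{cor:ecldecomp}. Conversely, for $\mathcal{F}\!\in\!\latg{\pcl{\rel}}$, the $\PCL$es contained in $\bigcup\mathcal{F}$ are precisely the members of $\mathcal{F}$ (since distinct equilocality classes are disjoint), giving $\defl(\rho(\mathcal{F}))\!=\!\mathcal{F}$. Order preservation of both $\defl$ and $\rho$ is immediate from the definitions, so $\defl$ is a lattice isomorphism.

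Finally, compatibility with the duality maps is essentially the content of \prettyref{lem:defldu}, which states $\defl(\du X)\!=\!\pcl{\rel}(\defl(X))$ for $X\!\in\!\latg{\rel}$. The only subtlety is that the main obstacle throughout is keeping careful track of the two distinct meanings of duality (one on $\latg{\rel}$, one on $\latg{\pcl{\rel}}$) and ensuring that the inverse $\rho$ indeed produces closed sets in $\rel$; both points are handled by combining \prettyref{lem:dualclosed} with the definition of adjacency in the deflation.
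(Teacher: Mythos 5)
Your proof is correct and follows essentially the same route as the paper: the inverse is the union map, surjectivity/bijectivity comes from $X\!=\!\defl\!\left(\cup X\right)$, and compatibility with the dualities is exactly \prettyref{lem:defldu}. Your explicit verification that $\bigcup\mathcal{F}$ is a closed set of $\rel$ (via \prettyref{lem:dualclosed} and the definition of adjacency in $\pcl{\rel}$) is a welcome extra step that the paper leaves implicit.
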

\begin{proof}
The map $X\!\mapsto\!\defl\!\left(X\right)$ between $\latg{\rel}$
and $\latg{\pcl{\rel}}$ is obviously order-preserving and injective.
It is also surjective (hence an isomorphism) because $X\!=\!\defl\!\left(\cup X\right)$
for every $X\!\in\!\latg{\pcl{\rel}}$, and it is compatible with
the respective duality maps according to \prettyref{eq:defldu}. 
\end{proof}
The importance of \prettyref{thm:defliso} is that it allows to reduce
the study of the lattice $\latg{\rel}$ to that of $\latg{\pcl{\rel}}$
and the deflation isomorphism $X\!\mapsto\!\defl\!\left(X\right)$.
In particular, the structure of $\latg{\rel}$ as a self-dual lattice
is completely captured by that of $\latg{\pcl{\rel}}$. For this reason,
we shall usually restrict our attention to irreducible graphs (unless
indicated otherwise). It is also of great practical importance for
explicit computations, since the cost of determining the equilocality
classes and the deflation map grows polynomially with size (= number
of vertices), while in case of the associated lattice the growth is
exponential.

Let's go back to the question of classifying all non-isomorphic graphs
with isomorphic associated lattices. \prettyref{thm:defliso} goes
a long way in answering it, since it shows that graphs with isomorphic
deflations have isomorphic associated lattices. Since the deflation
of a graph is always irreducible according to \prettyref{lem:deflirr},
the problem can be reduced to that of classifying all non-isomorphic
irreducible graphs with isomorphic associated lattices. To attack
this problem, we need to show first that every self-dual lattice is
the associated lattice of some irreducible graph.

\section{The duality graph}

As we have seen in the previous section, one can associate to any
undirected graph a self-dual lattice. It is now time to look at the
inverse procedure that associates to a self-dual lattice an undirected
graph whose associated lattice is isomorphic with the one that we
started with.

\global\long\def\ve{L}%
\global\long\def\drel{\boldsymbol{\nabla}_{L}}%
\global\long\def\phia#1{L_{#1}}%
\global\long\def\dugr{\textrm{duality graph}}%
\global\long\def\dgr{\mathtt{\boldsymbol{\nabla}}{}_{L}}%

\begin{defn}
\label{def:dugr}Let $L$ be a self-dual lattice with duality map
$a\!\mapsto\!\duc a$. The $\dugr$ $\dgr$ is the undirected graph
whose vertices are the elements of $L$, with $a,b\!\in\!L$ adjacent
if $a\!\leq\!\duc b$.
\end{defn}
\begin{rem}
One could consider a variant $\FA{\dgr}$ of the $\dugr$ that is
obtained by reversing the order relation in the definition of $\dgr$,
i.e. by declaring the elements $a,b\!\in\!L$ adjacent if $a\!\geq\!\duc b$,
but this won't lead to anything new, as the duality map $a\!\mapsto\!\duc a$
provides a natural graph isomorphism between $\dgr$ and $\FA{\dgr}$.
\end{rem}
\begin{lem}
\label{lem:dungjoin}The image of $a\!\in\!L$ under the duality map
equals the join of its vertex-neighborhood $\dgr\!\left(a\right)$,
i.e. $\duc a\!=\!\FE{\dgr}\!\left(a\right)$.
\end{lem}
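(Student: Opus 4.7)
The plan is to recognise $\dgr\!\left(a\right)$ as the principal ideal of $L$ generated by $\duc{a}$; once this identification is in place, the claim follows immediately because the top element of a principal ideal is its join.

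First I would reformulate the adjacency condition in a symmetric form. By \prettyref{def:dugr}, $b\!\in\!\dgr\!\left(a\right)$ means $a\!\leq\!\duc{b}$. Applying the duality map $c\!\mapsto\!\duc{c}$, which is both involutive and order-reversing, to this inequality yields $b\!=\!\duc{\left(\duc{b}\right)}\!\leq\!\duc{a}$; the same manipulation run backwards gives the converse. Hence
\begin{equation*}
\dgr\!\left(a\right)\!=\!\set{b\!\in\!L}{b\!\leq\!\duc{a}}.
\end{equation*}

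Now $\duc{a}\!\leq\!\duc{a}$ trivially, so $\duc{a}$ itself lies in $\dgr\!\left(a\right)$, while by the identification just obtained every element of $\dgr\!\left(a\right)$ is bounded above by $\duc{a}$. Therefore $\duc{a}$ is the greatest element of $\dgr\!\left(a\right)$ and, in particular, coincides with its join $\FE{\dgr}\!\left(a\right)$.

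There is no substantive obstacle here: the argument rests solely on the symmetry of the adjacency relation of $\dgr$ (which is supplied by the involutive, order-reversing nature of the duality map) together with the elementary fact that a subset of $L$ which contains one of its own upper bounds has that element as its join. In spirit the proof is the lattice-theoretic counterpart of \prettyref{lem:dualsym}, executed inside $L$ rather than in a power set.
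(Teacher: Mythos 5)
Your proof is correct and follows essentially the same route as the paper, which likewise rests on the identification $\dgr\!\left(a\right)\!=\!\set{b\!\in\!L}{b\!\leq\!\duc a}$; you merely spell out the symmetry step and the principal-ideal observation that the paper leaves implicit.
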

\begin{proof}
This follows at once from $\dgr\!\left(a\right)\!=\!\set{b\!\in\!L}{b\!\leq\!\duc a}$.
\end{proof}
\begin{cor}
\label{cor:dugrirr}The $\dugr$ $\dgr$ is irreducible.
\end{cor}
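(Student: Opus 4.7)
The plan is to show directly that if two elements $a,b\!\in\!L$ have the same vertex-neighborhood in $\dgr$, then they coincide. This is exactly the definition of irreducibility by \prettyref{lem:deflirr}: each equilocality class is a singleton iff the map $x\!\mapsto\!\dgr(x)$ is injective.

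The key input is \prettyref{lem:dungjoin}, which I would invoke immediately. It states that $\duc a\!=\!\FE{\dgr}\!\left(a\right)$, i.e.\ the duality image of $a$ is recovered from its neighborhood alone (as the join of the latter). So if $\dgr(a)\!=\!\dgr(b)$, then taking joins on both sides yields $\duc a\!=\!\duc b$.

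Finally, I would appeal to the fact that the duality map on any self-dual lattice is involutive, hence a bijection: applying it to both sides of $\duc a\!=\!\duc b$ gives $a\!=\!b$. This shows that equilocal vertices of $\dgr$ must coincide, so every equilocality class is a singleton, proving irreducibility.

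The argument is essentially a one-liner once \prettyref{lem:dungjoin} is in hand; there is no real obstacle, only the conceptual point that a vertex of $\dgr$ is completely determined by its neighborhood because the duality map, being involutive, loses no information when one passes from $a$ to $\duc a$.
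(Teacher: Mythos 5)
Your argument is correct and is exactly the paper's proof: invoke \prettyref{lem:dungjoin} to get $\duc a\!=\!\duc b$ from equal neighborhoods, then use injectivity of the involutive duality map to conclude $a\!=\!b$. The paper merely compresses the last step into ``i.e.\ $a\!=\!b$''.
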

\begin{proof}
By \prettyref{lem:dungjoin}, $\dgr\!\left(a\right)\!=\!\dgr\!\left(b\right)$
implies $\duc a\!=\!\duc b$, i.e. $a\!=\!b$. 
\end{proof}
\begin{lem}
\label{lem:dugrdu}$\dgr\!\left(X\right)\!=\!\dgr\!\left(\FE X\right)$
for $X\!\subseteq\!L$.
\end{lem}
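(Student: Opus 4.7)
The plan is to unfold both sides using \prettyref{def:dugr} together with the universal property of the join. By definition of the neighborhood of a set, one has
\[
\dgr\!\left(X\right)=\bigcap_{a\in X}\dgr\!\left(a\right)=\set{b\!\in\!L}{a\!\leq\!\duc b\text{ for every }a\!\in\!X},
\]
whereas $\dgr\!\left(\FE X\right)=\set{b\!\in\!L}{\FE X\!\leq\!\duc b}$ directly from \prettyref{def:dugr}. The lemma therefore reduces to the equivalence ``$\duc b$ dominates every element of $X$'' iff ``$\duc b$ dominates $\FE X$'', which is nothing more than the defining property of the least upper bound.

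I would then spell this out in two short implications. For $\dgr\!\left(\FE X\right)\!\subseteq\!\dgr\!\left(X\right)$: if $\FE X\!\leq\!\duc b$, then $a\!\leq\!\FE X\!\leq\!\duc b$ for each $a\!\in\!X$, so $b\!\in\!\dgr\!\left(a\right)$ for all such $a$ and hence $b\!\in\!\dgr\!\left(X\right)$. For the reverse inclusion: if $b\!\in\!\dgr\!\left(X\right)$, then $\duc b$ is an upper bound of $X$ by the description above, and must therefore dominate the least such bound $\FE X$, giving $b\!\in\!\dgr\!\left(\FE X\right)$.

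There is essentially no obstacle here: the identity is a purely order-theoretic repackaging of what it means for $\FE X$ to be the join of $X$, translated across the duality map. The one degenerate case worth noting is $X\!=\!\emptyset$, where $\FE\emptyset$ is the minimum element of $L$ (which exists because the lattice is assumed to admit joins and meets of arbitrary subsets) and both sides of the asserted equality reduce to all of $L$, since every $b$ is vacuously adjacent to every element of $\emptyset$ while the minimum lies below every $\duc b$.
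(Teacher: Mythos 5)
Your proof is correct and follows essentially the same route as the paper's: both unfold the two neighborhoods via the definition of the duality graph and observe that ``$\duc{b}$ is an upper bound of $X$'' is equivalent to ``$\FE X\leq\duc{b}$'' by the universal property of the join. The extra remark about $X=\emptyset$ is a harmless (and reasonable) addition that the paper leaves implicit.
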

\begin{proof}
Indeed,
\[
\dgr\!\left(\FE X\right)\!=\!\set{b\!\in\!L}{\FE X\!\leq\!\duc b}\!=\!\set{b\!\in\!L}{a\!\leq\!\duc b\textrm{ for all }a\!\in\!X}\!=\!\dgr\!\left(X\right)
\]
by the very definition of the join $\FE X$.
\end{proof}
\begin{cor}
\label{cor:dugrels}Every element $X\!\in\!\latg{\dgr}$ is a vertex-neighborhood.
\end{cor}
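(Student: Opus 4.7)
The plan is to combine two results already established: \prettyref{lem:dualclosed} (applied to the graph $\dgr$ in place of $\rel$) and \prettyref{lem:dugrdu}. Specifically, I would first invoke \prettyref{lem:dualclosed}, which says that every closed set of a graph is an intersection of vertex-neighborhoods and, more usefully here, that every $X \in \latg{\dgr}$ arises as $X = \dgr(Y)$ for some subset $Y$ of vertices of $\dgr$. Since the vertex set of $\dgr$ is $L$ itself, we have $Y \subseteq L$.

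Next, I would apply \prettyref{lem:dugrdu}, which asserts $\dgr(Y) = \dgr(\FE Y)$ for any $Y \subseteq L$. The existence of the join $\FE Y$ is guaranteed because $L$ is a lattice (with joins of all subsets, in the sense fixed earlier in the paper). Writing $a = \FE Y \in L$, we conclude that $X = \dgr(a)$, i.e.\ $X$ is the vertex-neighborhood of the single vertex $a$ of $\dgr$.

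There is really no obstacle here: the statement is essentially a direct consequence of the two preceding lemmas, combined with the fact that the vertex set of $\dgr$ coincides with $L$, so that the subsets $Y$ produced by \prettyref{lem:dualclosed} admit a join inside the lattice. If anything, the only point worth a brief remark is that this shows the closed sets of $\dgr$ are in bijection with its vertices, which is stronger than mere irreducibility (already obtained in \prettyref{cor:dugrirr}) and will be crucial for showing that $\latg{\dgr} \cong L$ as self-dual lattices in the next step.
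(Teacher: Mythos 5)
Your proposal is correct and follows exactly the paper's own argument: apply \prettyref{lem:dualclosed} to write $X = \dgr(Y)$ for some $Y \subseteq L$, then use \prettyref{lem:dugrdu} to replace $Y$ by the single vertex $a = \FE Y$. Nothing to add.
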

\begin{proof}
By \prettyref{lem:dualclosed}, any $X\!\in\!\latg{\dgr}$ can be
written as $X\!=\!\dgr\!\left(Y\right)$ with a suitable $Y\!\subseteq\!L$,
hence $X\!=\!\dgr\!\left(a\right)$ by \prettyref{lem:dugrdu} (with
$a\!=\!\FE Y$).
\end{proof}
\begin{thm}
\label{thm:dugr}Any self-dual lattice is isomorphic to the associated
lattice of its $\dugr$.
\end{thm}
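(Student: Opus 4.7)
The plan is to construct an explicit isomorphism $\map{\phi}{L}{\latg{\dgr}}$ by sending each lattice element to a vertex-neighborhood of $\dgr$. Since the definition of adjacency gives $\dgr\!\left(a\right)\!=\!\set{b\!\in\!L}{b\!\leq\!\duc a}$, the raw assignment $a\!\mapsto\!\dgr\!\left(a\right)$ is order-reversing; composing with the duality map of $L$ flips the orientation, so I would work with
\[
\phi\!\left(a\right)\!=\!\dgr\!\left(\duc a\right)\!=\!\set{b\!\in\!L}{b\!\leq\!a},
\]
the principal ideal generated by $a$.

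The verification splits into four routine checks. First, $\phi$ lands in $\latg{\dgr}$: each $\phi\!\left(a\right)$ is a vertex-neighborhood of $\dgr$ and hence closed by \prettyref{lem:dualclosed}. Second, $\phi$ is bijective: injectivity follows from \prettyref{lem:dungjoin}, which recovers $a$ from $\phi\!\left(a\right)$ as $a\!=\!\duc{(\duc a)}\!=\!\FE{\phi\!\left(a\right)}$, while surjectivity is exactly \prettyref{cor:dugrels}, since any $X\!\in\!\latg{\dgr}$ has the form $\dgr\!\left(c\right)\!=\!\phi\!\left(\duc c\right)$ for some $c\!\in\!L$. Third, $\phi$ is an order-isomorphism: the forward direction $a\!\leq\!a'\Rightarrow\phi\!\left(a\right)\!\subseteq\!\phi\!\left(a'\right)$ is immediate from the principal-ideal description, and the converse uses $a\!\in\!\phi\!\left(a\right)$.

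The step demanding the most care is compatibility with the two duality maps, i.e. $\phi\!\left(\duc a\right)\!=\!\dgr\!\left(\phi\!\left(a\right)\right)$, where the right-hand side is the duality of $\latg{\dgr}$ supplied by \prettyref{thm:graphlat}. The left side is $\dgr\!\left(a\right)$. For the right, I would apply \prettyref{lem:dugrdu} followed by \prettyref{lem:dungjoin}:
\[
\dgr\!\left(\phi\!\left(a\right)\right)\!=\!\dgr\!\left(\dgr\!\left(\duc a\right)\right)\!=\!\dgr\!\left(\FE{\dgr\!\left(\duc a\right)}\right)\!=\!\dgr\!\left(\duc{(\duc a)}\right)\!=\!\dgr\!\left(a\right),
\]
matching the left side. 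This is the main obstacle only in the bookkeeping sense --- all ingredients have been pre-packaged into the preceding lemmas, and the only real subtlety is keeping the orientation straight, which is handled by the initial choice of $\phi$.
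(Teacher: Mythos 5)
Your proposal is correct and follows essentially the same route as the paper: the same map $a\!\mapsto\!\dgr\!\left(\duc a\right)$, with surjectivity from \prettyref{cor:dugrels}, injectivity/inverse from \prettyref{lem:dungjoin}, and duality-compatibility via \prettyref{lem:dugrdu}. Your only additions are cosmetic but welcome ones --- identifying $\phi\!\left(a\right)$ as the principal ideal of $a$, which makes order-preservation transparent, and spelling out the duality-compatibility computation that the paper dismisses as ``clear''.
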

\begin{proof}
Let $L$ denote a self-dual lattice. We claim that the map
\[
\Map{\Phi}L{\latg{\dgr}}a{\dgr\!\left(\duc a\right)}
\]
that assigns to each lattice element the vertex-neighborhood of its
dual provides an isomorphism between $L$ and the associated lattice
$\latg{\dgr}$ of its $\dugr$ $\dgr$. Clearly, $\Phi$ is an order-preserving
map, being the composite of two (order-reversing) duality maps, and
it is surjective by \prettyref{cor:dugrels}. Because the map
\[
\Map{\Psi}{\latg{\dgr}}LX{\FE X}
\]
is inverse to $\Phi$ according to \prettyref{lem:dungjoin}, $\Phi$
is actually bijective, hence a lattice isomorphism. Finally, $\Phi$
is compatible with the respective duality maps since clearly $\Phi\!\left(\duc a\right)\!=\!\dgr\!\left(a\right)$
for $a\!\in\!L$.
\end{proof}
It follows from the above that every self-dual lattice is the associated
lattice of some irreducible graph (e.g. of its own $\dugr$). But
it should be emphasized that this correspondence is far from being
one-to-one, for there could (and usually does) exist several non-isomorphic
irreducible graphs with isomorphic associated lattices. To get control
over them, we shall need the notions of essential vertices and the
radical of a graph, to be introduced in the next section.

\section{Essential vertices and the radical}

\global\long\def\sw{\mathtt{W}}%
\global\long\def\igr#1#2{#1^{#2}}%
\global\long\def\sigr{\igr{\rel}{\sw}}%
\global\long\def\res{\mathtt{res}_{\sw}}%
\global\long\def\ve{\mathtt{V}}%
\global\long\def\sgr#1{\igr{\rel}{\sw_{#1}}}%
\global\long\def\isl{\mathbb{L}_{\sw}\left(\rel\right)}%

As we have seen previously, one can associate a self-dual lattice
to any undirected graph, and it is natural to ask to what extent does
this associated lattice characterize the graph itself. To some extent
this has been answered by \prettyref{thm:defliso}, showing that graphs
with isomorphic deflations have isomorphic associated lattices, allowing
to reduce the question to the case of irreducible graphs. But there
are usually several different irreducible graphs with isomorphic associated
lattices: in particular, \prettyref{thm:dugr} asserts that this is
the case for any irreducible graph that is not isomorphic with the
$\dugr$ of its associated lattice. To settle this issue, we have
to look at the associated lattices of induced subgraphs.

Recall that the induced subgraph $\sigr$ corresponding to a collection
$\sw$ of vertices of the (undirected) graph $\rel$ has as vertices
the elements of $\sw$, with two of them adjacent precisely when they
are adjacent as vertices of $\rel$: this means that $\sigr\!\left(x\right)\!=\!\rel\!\left(x\right)\!\cap\!\sw$
for $x\!\in\!\sw$, and in general $\sigr\!\left(X\right)\!=\!\rel\!\left(X\right)\!\cap\!\sw$
for any $X\!\subseteq\!\sw$. Note that an induced subgraph of an
irreducible graph is not necessarily irreducible. The importance of
induced subgraphs stems from the following result.
\begin{lem}
\label{lem:dugrsub}Every irreducible graph is isomorphic with an
induced subgraph of the $\dugr$ of its associated lattice.
\end{lem}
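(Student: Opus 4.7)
The plan is to set up the embedding via the natural map
\[
\Map{\phi}{\ve}{\latg{\rel}}{x}{\cl x}
\]
that sends each vertex to the least element of the associated lattice $L\!=\!\latg{\rel}$ containing it (cf.~\prettyref{lem:dpc}), and then to verify that $\phi$ is a graph isomorphism between $\rel$ and the induced subgraph of the duality graph $\dgr$ on $\phi\!\left(\ve\right)$.

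First I would check that $\phi$ is injective. Since $\star x\!=\!\du{\{x\}}$ already lies in $\latg{\rel}$ by \prettyref{lem:dualclosed}, applying \prettyref{eq:cldu} to the singleton $\{x\}$ yields $\du{\cl x}\!=\!\du{\{x\}}\!=\!\star x$. Thus $\cl x\!=\!\cl y$ implies $\star x\!=\!\star y$, and irreducibility of $\rel$ then forces $x\!=\!y$. This is the only step in which the irreducibility hypothesis enters.

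Next I would verify that $\phi$ preserves and reflects adjacency. By \prettyref{thm:graphlat} the duality map on $L\!=\!\latg{\rel}$ is $X\!\mapsto\!\du X$, so by \prettyref{def:dugr} two lattice elements $\cl x,\cl y$ are adjacent in $\dgr$ exactly when $\cl x\!\subseteq\!\du{\cl y}$. Using the characterization $\cl z\!=\!\set{u\!\in\!\ve}{\star z\!\subseteq\!\star u}$ from \prettyref{lem:dpc}: if $x$ and $y$ are adjacent in $\rel$, then for any $z\!\in\!\cl x$ and $w\!\in\!\cl y$ the chain $y\!\in\!\star x\!\subseteq\!\star z$ followed by $z\!\in\!\star y\!\subseteq\!\star w$ shows $z$ adjacent to $w$, giving $\cl x\!\subseteq\!\du{\cl y}$. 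Conversely, $\cl x\!\subseteq\!\du{\cl y}$ together with $x\!\in\!\cl x$ and $y\!\in\!\cl y$ immediately forces $x$ adjacent to $y$. The degenerate case $x\!=\!y$, relevant for loops, is absorbed into the same argument without change.

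Nothing in this plan looks genuinely delicate. The only point requiring care is keeping the nested closures and duals straight -- in particular remembering that it is $\du{\cl x}$, not $\cl{\cl x}$, that collapses back to $\star x$ -- and observing that the adjacency equivalence in the previous paragraph actually holds for arbitrary $\rel$, so that irreducibility is needed only to secure injectivity. Once both parts are in place, $\phi$ restricts to a graph isomorphism from $\rel$ onto the subgraph of $\dgr$ induced on $\phi\!\left(\ve\right)$, which is exactly the required assertion.
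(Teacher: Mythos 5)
Your proposal is correct and follows essentially the same route as the paper: both embed $\rel$ via $x\!\mapsto\!\cl x$ into the duality graph, use irreducibility only to secure injectivity (via $\du{\cl x}\!=\!\star x$), and check that $\cl x\!\subseteq\!\du{\cl y}$ is equivalent to the adjacency of $x$ and $y$ using \prettyref{lem:dpc}. Your write-up merely spells out a couple of intermediate steps that the paper leaves implicit.
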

\begin{proof}
Let $L\!=\!\latg{\rel}$ denote the associated lattice of the irreducible
graph $\rel$, and consider the collection $\sw\!=\!\set{\cl x\!}{x\!\in\!\ve}\!\subseteq\!L$
of lattice elements. The vertices of $\rel$ are in one-to-one correspondence
with the elements of $\sw$ since $\cl x\!=\!\cl y$ implies $x\!=\!y$
by irreducibility, and the vertices $\cl x$ and $\cl y$ of $\dgr^{\sw}$
are adjacent iff $\cl x\!\subseteq\!\du{\cl y}\!=\!\du y$, which
is equivalent to $x\!\in\!\du y$ by \prettyref{lem:dpc}, i.e. the
adjacency of the vertices $x$ and $y$ of $\rel$, proving that $\dgr^{\sw}$
is indeed isomorphic to $\rel$.
\end{proof}
The following result will be essential in the proof of \prettyref{thm:isgisom}.
\begin{lem}
\label{lem:isgr}If $\sw$ denotes a collection of vertices of the
 graph $\rel$, then $\isl\!=\!\set{\rel\!\left(X\right)\!}{X\!\subseteq\!\sw}$
is a subset of the associated lattice $\latg{\rel}$, and $\latg{\sigr\,}\!=\!\set{X\!\cap\!\sw}{X\!\in\!\isl}$.
Moreover, $\cl{X\!\cap\!\sw}\!=\!X$ and $\sigr\!\left(X\!\cap\!\sw\right)\!=\!\sigr\!\left(X\right)$
in case $X\!\in\!\latg{\rel}$ satisfies $\du X\!\in\!\isl$.
\end{lem}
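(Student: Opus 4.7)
My plan is to dispatch the three assertions in the stated order. The first is immediate from \prettyref{lem:dualclosed}, the second reduces to applying that same lemma inside the induced subgraph, and the third rests on a monotonicity argument built on \prettyref{eq:cldu}.

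For the first claim, every member of $\isl$ has the form $\rel\!\left(Z\right)$ for some $Z\!\subseteq\!\sw$, and \prettyref{lem:dualclosed} states that such sets are precisely the closed subsets of $\ve$, so $\isl\!\subseteq\!\latg{\rel}$ with no further work. For the second, I would exploit the identity $\sigr\!\left(Z\right)\!=\!\rel\!\left(Z\right)\!\cap\!\sw$ valid for $Z\!\subseteq\!\sw$, which is just the definition of the induced subgraph. Applying \prettyref{lem:dualclosed} inside $\sigr$, every element of $\latg{\sigr}$ arises as $\sigr\!\left(Z\right)\!=\!\rel\!\left(Z\right)\!\cap\!\sw$, which is exactly $X\!\cap\!\sw$ with $X\!=\!\rel\!\left(Z\right)\!\in\!\isl$. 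Conversely, any $X\!=\!\rel\!\left(Z\right)\!\in\!\isl$ with $Z\!\subseteq\!\sw$ yields $X\!\cap\!\sw\!=\!\sigr\!\left(Z\right)\!\in\!\latg{\sigr}$, giving the claimed equality.

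For the ``moreover'' part, the hypothesis $\du X\!\in\!\isl$ furnishes a $Y\!\subseteq\!\sw$ with $\du X\!=\!\rel\!\left(Y\right)$. Applying $\rel$ to both sides and invoking \prettyref{eq:cldu} gives $X\!=\!\cl X\!=\!\cl Y$. Since $Y\!\subseteq\!\cl Y\!=\!X$ and $Y\!\subseteq\!\sw$, the chain $Y\!\subseteq\!X\!\cap\!\sw\!\subseteq\!X$ holds, and monotonicity of the closure (\prettyref{eq:clord}) then forces $X\!=\!\cl Y\!\subseteq\!\cl{X\!\cap\!\sw}\!\subseteq\!\cl X\!=\!X$, so $\cl{X\!\cap\!\sw}\!=\!X$. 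A final application of $\rel$ combined with \prettyref{eq:cldu} yields $\rel\!\left(X\!\cap\!\sw\right)\!=\!\rel\!\left(X\right)$; intersecting with $\sw$ produces $\sigr\!\left(X\!\cap\!\sw\right)\!=\!\sigr\!\left(X\right)$. I do not anticipate any real obstacle, since every step reduces to a direct invocation of an earlier lemma; the only strategic observation is that the assumption $\du X\!\in\!\isl$ is precisely what exhibits a subset $Y$ of $\sw$ whose closure equals $X$, enabling the monotonicity argument that pins down $\cl{X\!\cap\!\sw}$.
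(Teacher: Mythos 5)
Your proposal is correct and follows essentially the same route as the paper: both identify elements of $\isl$ as duals of subsets of $\sw$ via \prettyref{lem:dualclosed}, both use the chain $Y\!\subseteq\!X\!\cap\!\sw\!\subseteq\!X$ together with monotonicity of the closure to pin down $\cl{X\!\cap\!\sw}\!=\!X$, and both finish by applying $\rel$ and intersecting with $\sw$. The only difference is cosmetic: you spell out the two inclusions in the identity $\latg{\sigr\,}\!=\!\set{X\!\cap\!\sw}{X\!\in\!\isl}$, which the paper leaves largely implicit.
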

\begin{proof}
$X\!\in\!\isl$ means that there exists $X^{*}\!\subseteq\!\sw$ such
that $X\!=\!\du{X^{*}}$, hence $X\!\in\!\latg{\rel}$ and $X\!\cap\!\sw\!=\!\sigr\!\left(X^{*}\right)\!\in\!\latg{\sigr\,}$
by \prettyref{lem:dualclosed}. As to the second statement, $\du X\!\in\!\isl$
if there exists $X^{*}\!\subseteq\!\sw$ such that $\du X\!=\!\du{X^{*}}$,
which implies $X^{*}\!\subseteq\!\cl{X^{*}}\!=\!X$ according to \prettyref{eq:clext},
i.e. $X^{*}\!\subseteq\!X\!\cap\!\sw\!\subseteq\!X$. As a consequence,
$X\!=\!\cl{X^{*}}\!\subseteq\!\cl{X\!\cap\!\sw}\!\subseteq\!X$ by
\prettyref{eq:cldu}, proving that indeed $\cl{X\!\cap\!\sw}\!=\!X$,
and applying $\rel$ to both sides leads to $\du{X\!\cap\!\sw}\!=\!\du X$,
hence $\sigr\!\left(X\!\cap\!\sw\right)\!=\!\sigr\!\left(X\right)$.
\end{proof}
Let's recall that an element of a lattice is called (completely) meet-irreducible
(resp. join-irreducible), if any collection of lattice elements whose
meet (resp. join) is the given element necessarily contains that element;
note that in a self-dual lattice the duality map interchanges meet-irreducible
elements with join-irreducible ones. A meet-(resp. join-)irreducible
decomposition of a lattice element $a$ is a collection $A$ of meet-(resp.
join-)irreducible elements whose meet (resp. join) equals $a$, and
the decomposition is irredundant if no proper subset of $A$ has this
last property. In a finite lattice all lattice elements have (not
necessarily unique) irredundant irreducible decompositions.
\begin{lem}
\label{lem:pclunion}Any meet-irreducible element of the lattice $\latg{\rel}$
is the $\PCS$ $\star x$ of some vertex $x\!\in\!\ve$.
\end{lem}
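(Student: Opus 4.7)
The plan is to exploit Lemma \ref{lem:dualclosed}, which tells us that every closed set is an intersection of vertex-neighborhoods, and then apply the meet-irreducibility hypothesis directly to that intersection.

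Given a meet-irreducible $X\!\in\!\latg{\rel}$, Lemma \ref{lem:dualclosed} yields $X\!=\!\du Y$ for some $Y\!\subseteq\!\ve$ (one may take $Y\!=\!\du X$). By the very definition \eqref{eq:dualdef} of the dual, this rewrites as
\[
X\!=\!\bigcap_{y\in Y}\star y.
\]
Each factor $\star y\!=\!\du{\left\{ y\right\} }$ is again an element of $\latg{\rel}$ by Lemma \ref{lem:dualclosed}, and the meet in $\latg{\rel}$ is just intersection (as recorded after Theorem \ref{thm:graphlat}, and extended from pairs to arbitrary families by the usual argument that a closure operator makes intersections closed). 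Hence the above display is an honest meet-decomposition
\[
X\!=\!\bigwedge_{y\in Y}\star y
\]
inside the lattice $\latg{\rel}$.

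Meet-irreducibility of $X$ then forces $X$ itself to appear among the factors, i.e.\ $X\!=\!\star x$ for some $x\!\in\!Y\!\subseteq\!\ve$, which is the desired conclusion. The only degenerate case to rule out is $Y\!=\!\emptyset$, which would make $X$ equal to the top element $\ve$ of $\latg{\rel}$; but then $X$ would be the meet of the empty family, contradicting meet-irreducibility, so this case does not arise. There is no real obstacle here: the statement is essentially a one-line consequence of Lemma \ref{lem:dualclosed} combined with the definition of meet-irreducibility.
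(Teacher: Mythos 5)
Your proof is correct and follows essentially the same route as the paper's (much terser) one: both write the meet-irreducible $X$ as $\du{\du X}\!=\!\bigcap_{x\in\du X}\star x$ via \prettyref{eq:dualdef} and invoke complete meet-irreducibility to force $X\!=\!\star x$ for some $x\!\in\!\du X$. Your explicit treatment of the degenerate case $Y\!=\!\emptyset$ (i.e. $X\!=\!\ve$ as the empty meet) is a welcome addition that the paper leaves implicit.
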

\begin{proof}
According to \prettyref{eq:dualdef}, there should exist some $x\!\in\!\du X$
such that $X\!=\!\star x$ in case $X\!\in\!\latg{\rel}$ is meet-irreducible.
\end{proof}
\global\long\def\ess#1{\mathcal{E}\!\left(#1\right)}%
\global\long\def\rad#1{\sqrt{#1}}%
\global\long\def\stg{\textrm{stem graph}}%

The above result justifies the following notion.
\begin{defn}
\label{def:essdef}A vertex $x$ of an undirected graph $\rel$ is
essential if its neighborhood $\rel\!\left(x\right)$ is a meet-irreducible
(or, what is the same, $\cl x$ is a join-irreducible) element of
the associated lattice $\latg{\rel}$. A \emph{$\stg$} is an irreducible
graph all of whose vertices are essential.

We shall denote by $\ess{\rel}$ the collection of all essential vertices
of the graph $\rel$. Notice that either none or all vertices in an
equilocality class are essential, hence it makes sense to speak of
essential classes. What is more, the deflation map provides a one-to-one
correspondence between the essential classes and the essential vertices
of the deflation, i.e. $\ess{\pcl{\rel}}\!=\!\defl\ess{\rel}$.

For duality graphs of self-dual lattices one has the following lattice-theoretic
characterization of essential vertices.
\end{defn}
\begin{lem}
\label{lem:dugress}A vertex of the $\dugr$ of a self-dual lattice
is essential iff it is a join-irreducible lattice element.
\end{lem}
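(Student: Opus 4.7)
The plan is to use the lattice isomorphism $\Phi: L \to \latg{\dgr}$ from \prettyref{thm:dugr} to translate the definition of ``essential'' back into the lattice $L$ itself. By \prettyref{def:essdef}, a vertex $a$ of $\dgr$ is essential precisely when $\cl a$ is a join-irreducible element of $\latg{\dgr}$, so the whole argument reduces to identifying the closure $\cl a$ inside $\latg{\dgr}$ in terms of $\Phi$.

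First I would compute $\cl a$ explicitly. By definition $\cl a = \dgr(\dgr(a))$; applying \prettyref{lem:dugrdu} to the inner occurrence of $\dgr(a)$ rewrites this as $\dgr\bigl(\bigvee \dgr(a)\bigr)$, while \prettyref{lem:dungjoin} identifies $\bigvee \dgr(a)$ with $\duc a$. Hence $\cl a = \dgr(\duc a) = \Phi(a)$.

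Next, since $\Phi$ is a lattice isomorphism (with order-preserving inverse $X \mapsto \bigvee X$, as established in the proof of \prettyref{thm:dugr}), join-irreducibility is preserved in both directions: $\Phi(a)$ is join-irreducible in $\latg{\dgr}$ iff $a$ is join-irreducible in $L$. Chaining this with the equivalence from \prettyref{def:essdef} yields the claim.

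The only non-formal ingredient is the identification $\cl a = \Phi(a)$, which is where the specific structure of the $\dugr$ actually enters; everything else is a transport of order-theoretic notions along an isomorphism. Consequently there is no real obstacle, and the proof amounts to little more than bookkeeping of the two previous lemmas combined with the general fact that isomorphisms preserve join-irreducibility.
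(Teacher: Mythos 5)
Your proof is correct, and it takes a genuinely (if mildly) different route from the paper's. The paper argues on the meet-irreducible side: it unwinds the condition that the neighborhood $\dgr\!\left(a\right)$ be meet-irreducible in $\latg{\dgr}$ using the identity $\dgr\!\left(X\right)\!=\!\dgr\!\left(\bigvee X\right)$ from \prettyref{lem:dugrdu} together with the irreducibility of the duality graph (\prettyref{cor:dugrirr}), so that ``$\dgr\!\left(X\right)\!=\!\dgr\!\left(a\right)$ implies $a\!\in\!X$'' becomes exactly the join-irreducibility of $a$. You instead work on the join-irreducible side: your computation $\cl a\!=\!\dgr\!\left(\dgr\!\left(a\right)\right)\!=\!\dgr\!\left(\bigvee\dgr\!\left(a\right)\right)\!=\!\dgr\!\left(\duc a\right)\!=\!\Phi\!\left(a\right)$ is a correct combination of \prettyref{lem:dugrdu} and \prettyref{lem:dungjoin}, and transporting join-irreducibility along the lattice isomorphism $\Phi$ of \prettyref{thm:dugr} is legitimate, since (complete) join-irreducibility is defined purely in order-theoretic terms and isomorphisms preserve arbitrary joins. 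What your version buys is conceptual economy---once $\cl a$ is recognized as the image of $a$ under a lattice isomorphism, nothing remains to verify; what it costs is reliance on the full strength of \prettyref{thm:dugr}, whereas the paper's argument stays at the level of the two lemmas and makes explicit where the irreducibility of $\dgr$ enters (in your version that input is hidden inside the bijectivity of $\Phi$). Both arguments ultimately rest on the same two identities, so the difference is one of packaging rather than substance.
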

\begin{proof}
According to \prettyref{def:essdef}, $a\!\in\!L$ belongs to $\ess{\dgr}$
iff $\dgr\!\left(a\right)$ is a meet-irreducible element of $\latg{\dgr}$,
i.e. $\dgr\!\left(X\right)\!=\!\dgr\!\left(a\right)$ with $X\!\subseteq\!L$
would imply $a\!\in\!X$; since $\dgr\!\left(X\right)\!=\!\dgr\!\left(\FE X\right)$
by \prettyref{lem:dugrdu}, and because the $\dugr$ $\dgr$ is irreducible
by \prettyref{cor:dugrirr}, this is equivalent to the requirement
that $a\!=\!\FE X$ should imply $a\!\in\!X$, i.e. that $a\!\in\!L$
is join-irreducible, as claimed.
\end{proof}
While our definition of essential vertices relies on lattice-theoretic
notions, there is a purely graph-theoretic characterization of them.
\begin{prop}
\label{prop:irrcrit}A vertex is essential iff its neighborhood is
properly contained in the intersection of all vertex-$\PCS$s that
properly contain it.
\end{prop}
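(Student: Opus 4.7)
The plan is to translate \emph{essential} via \prettyref{def:essdef} into the lattice-theoretic condition that $\star x$ be meet-irreducible in $\latg{\rel}$, and then to show that this is equivalent to the purely graph-theoretic condition in the proposition.

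I would begin by setting $N\!=\!{\displaystyle \bigcap}\set{\star y}{y\!\in\!\ve,\,\star x\!\subsetneq\!\star y}$, the intersection of all vertex-$\PCS$s that properly contain $\star x$; by \prettyref{lem:dualclosed} this lies in $\latg{\rel}$ and satisfies $\star x\!\subseteq\!N$. What must be proven is that $\star x\!\subsetneq\!N$ iff $\star x$ is meet-irreducible.

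The forward implication I treat by contraposition: if $\star x\!=\!N$, the defining expression for $N$ exhibits $\star x$ as a meet (in $\latg{\rel}$) of lattice elements each strictly larger than $\star x$, contradicting meet-irreducibility. Conversely, if $\star x$ is not meet-irreducible, I pick a family $A\!\subseteq\!\latg{\rel}$ with $\bigcap A\!=\!\star x$ and $\star x\!\notin\!A$; each $X\!\in\!A$ then properly contains $\star x$, and \prettyref{lem:dualclosed} lets me rewrite every such $X$ as an intersection of vertex-$\PCS$s $\star y$, each of which contains $X$ and therefore properly contains $\star x$. Pooling these $y$'s over $X\!\in\!A$ yields a subfamily of $\set{y\!\in\!\ve}{\star x\!\subsetneq\!\star y}$ whose neighborhoods intersect to $\bigcap A\!=\!\star x$, forcing $N\!=\!\star x$ by squeezing $N$ between $\star x$ and this subfamily's intersection.

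The only delicate book-keeping is the degenerate case where no $\star y$ properly contains $\star x$: then $N\!=\!\ve$ (empty meet), and the equivalence collapses to the statement that $\star x$ is meet-irreducible iff $\star x\!\neq\!\ve$, reflecting the standard convention that the top of a lattice is not completely meet-irreducible. I do not anticipate any serious obstacle here; the proof is essentially a direct unpacking of meet-irreducibility in combination with \prettyref{lem:pclunion}, which guarantees that the meet-irreducible elements of $\latg{\rel}$ are precisely the vertex-neighborhoods, so the two characterizations of essentiality line up cleanly.
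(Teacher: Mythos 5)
Your argument is correct, and it reaches the stated equivalence by a route that is the order-dual of the paper's. You work entirely on the ``meet side'': you set $N$ equal to the intersection of all vertex-$\PCS$s properly containing $\rel\!\left(x\right)$ and unpack complete meet-irreducibility of $\rel\!\left(x\right)$ directly, using \prettyref{lem:dualclosed} to refine any meet-decomposition of $\rel\!\left(x\right)$ by lattice elements strictly above it into one by vertex-$\PCS$s strictly above it, which squeezes $N$ down to $\rel\!\left(x\right)$; the degenerate case where no $\PCS$ properly contains $\rel\!\left(x\right)$ is handled correctly by the empty-meet convention. The paper instead argues on the ``join side'': it introduces the set $\ucl x$ of vertices $y$ with $\rel\!\left(x\right)$ properly contained in $\rel\!\left(y\right)$, observes that its dual is exactly your $N$, and reduces the properness of the containment $\rel\!\left(x\right)\subseteq N$ to whether $\ucl x$ is a closed set, which it then identifies with join-irreducibility of the least closed set containing $x$ by exploiting the fact that closed sets are unions of equilocality classes. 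Your version is somewhat more elementary, needing no appeal to that equilocality-class dichotomy; the paper's version yields the auxiliary criterion ``$\ucl x$ is closed'', which is closer to what one actually tests in computations. One small slip in your closing sentence: \prettyref{lem:pclunion} asserts only that every meet-irreducible element of $\latg{\rel}$ is a vertex-$\PCS$, not that the two classes coincide --- indeed the whole point of essentiality is that some vertex-$\PCS$s fail to be meet-irreducible, and if the converse held the proposition would be vacuous. Your main argument never uses that false converse, so nothing breaks, but the phrase ``precisely the vertex-neighborhoods'' should be removed.
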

\begin{proof}
Since $y\!\in\!\cl x$ iff $\du x\!\subseteq\!\du y$ for any vertex
$x\!\in\!\ve$, the set $\ucl x\!=\!\set{y\!\in\!\cl x\!\!}{\star y\!\neq\!\star x\!}$
consists of all those elements $y\!\in\!\cl x$ that do not belong
to the equilocality class of $x$, hence it is a proper subset of
$\cl x$, and its dual $\du{\ucl x}$ equals the intersection of all
the vertex $\PCS$s that properly contain the neighborhood $\du x$
of $x$, hence $\du x\!\subseteq\!\du{\ucl x}$, and this containment
is proper iff the containment $\cl{\ucl x}\!\subseteq\!\cl x$ is
proper. Since the difference $\cl x\!\setminus\!\ucl x\!=\!\set y{\star y\!=\!\star x}$
is nothing but the $\PCL$ of $x$, and because $\cl{\ucl x}$ is
a union of equilocality classes by \prettyref{lem:pcl}, either $\cl{\ucl x}\!=\!\ucl x$
or $\cl{\ucl x}\!=\!\cl x$, hence the containment $\cl{\ucl x}\!\subseteq\!\cl x$
is proper iff $\ucl x\!\in\!\latg{\rel}$, and we claim that $\ucl x\!\in\!\latg{\rel}$
precisely when $\cl x$ is join-irreducible, i.e. the vertex $x$
is essential. Indeed, $\cl x$ is join-irreducible iff any subset
$X^{*}\!\subseteq\!\cl x$ such that $\cl x\!=\!\cl{X^{*}}$ contains
a vertex from the equilocality class of $x$, i.e. has at least one
element not contained in $\ucl x$, and this happens iff $\ucl x\!\in\!\latg{\rel}$.
\end{proof}
The importance of essential vertices stems from the following result.
\begin{thm}
\label{thm:isgisom}The associated lattice $\latg{\sigr}$ of the
induced subgraph $\sigr$ corresponding to a collection $\sw$ of
vertices is isomorphic with the associated lattice of $\rel$ iff
$\sw$ contains all essential vertices. 
\end{thm}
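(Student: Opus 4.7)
The plan is to prove both directions via the natural candidate map $\Phi:X\!\mapsto\!X\!\cap\!\sw$, whose behavior on $\isl$ is captured by \prettyref{lem:isgr}. The key observation is that $\Phi$ becomes a self-dual isomorphism between $\latg{\rel}$ and $\latg{\sigr}$ precisely when $\isl\!=\!\latg{\rel}$, and this equality is controlled by the essential vertices through meet-irreducible decompositions.

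For the ``if'' direction, assume $\ess{\rel}\!\subseteq\!\sw$. By \prettyref{lem:pclunion} combined with \prettyref{def:essdef}, the meet-irreducible elements of $\latg{\rel}$ are exactly the neighborhoods $\du x$ with $x\!\in\!\ess{\rel}$. Since $\latg{\rel}$ is finite, every $X\!\in\!\latg{\rel}$ admits a meet-irreducible decomposition $X\!=\!\bigcap_i\du{x_i}$ with each $x_i$ essential; \prettyref{lem:dualmeet} rewrites this as $X\!=\!\du Y$ with $Y\!=\!\{x_i\}\!\subseteq\!\ess{\rel}\!\subseteq\!\sw$, so $X\!\in\!\isl$, whence $\isl\!=\!\latg{\rel}$. \prettyref{lem:isgr} then makes $\Phi$ a surjection $\latg{\rel}\!\to\!\latg{\sigr}$ with left inverse $Z\!\mapsto\!\cl Z$, hence a bijection; order is preserved in both directions by the monotonicity of intersection and of $\cl{\cdot}$, and the identity $\sigr(X\!\cap\!\sw)\!=\!\du X\!\cap\!\sw$ furnished by the same lemma secures compatibility with the duality maps.

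For the ``only if'' direction, assume $\latg{\sigr}\!\cong\!\latg{\rel}$, so in particular $|\latg{\sigr}|\!=\!|\latg{\rel}|$. Since \prettyref{lem:isgr} exhibits $\Phi$ as a surjection from $\isl$ onto $\latg{\sigr}$, one has $|\latg{\sigr}|\!\leq\!|\isl|\!\leq\!|\latg{\rel}|$, and the cardinality equality collapses this chain to $\isl\!=\!\latg{\rel}$. Now for any $x\!\in\!\ess{\rel}$, the neighborhood $\du x$ is meet-irreducible and lies in $\isl$, so $\du x\!=\!\du Y$ for some $Y\!\subseteq\!\sw$; by \prettyref{lem:dualmeet} this reads $\du x\!=\!\bigcap_{y\in Y}\du y$, and meet-irreducibility forces $\du x\!=\!\du y$ for some $y\!\in\!Y$. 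Invoking the irreducibility of $\rel$ that stands throughout the paper, one concludes $y\!=\!x$, so $x\!\in\!\sw$, giving $\ess{\rel}\!\subseteq\!\sw$.

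The only delicate point is the appeal to irreducibility of $\rel$ in the final step, where $\du x\!=\!\du y$ must be upgraded to $x\!=\!y$; without this standing convention one only obtains that $\sw$ meets every essential equilocality class, which is the correct reformulation for reducible graphs and is equivalent under \prettyref{thm:defliso}. Otherwise the argument is pure bookkeeping once Lemmas \ref{lem:pclunion}, \ref{lem:dualmeet} and \ref{lem:isgr} are in hand, and I do not anticipate any genuine obstacle beyond tracing through these identities carefully.
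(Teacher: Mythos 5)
Your argument is correct and is essentially the paper's own proof: the only-if direction uses the same cardinality chain $|\mathbb{L}(\boldsymbol{\nabla}^{\mathtt{W}})|\le|\mathbb{L}_{\mathtt{W}}(\boldsymbol{\nabla})|\le|\mathbb{L}(\boldsymbol{\nabla})|$ supplied by Lemma \ref{lem:isgr} followed by meet-irreducibility of essential neighborhoods, while the if direction uses the same meet-irreducible decomposition into essential neighborhoods to get $\mathbb{L}_{\mathtt{W}}(\boldsymbol{\nabla})=\mathbb{L}(\boldsymbol{\nabla})$ and then the same pair of mutually inverse maps ($X\mapsto X\cap\mathtt{W}$ and closure) with the same duality-compatibility computation. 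Your closing observation that the last step of the only-if direction (upgrading $\boldsymbol{\nabla}(x)=\boldsymbol{\nabla}(y)$ to $x=y$) requires the standing irreducibility convention is a fair point that the paper leaves implicit.
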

\begin{proof}
To prove the only if part, notice that the isomorphism of two lattices
implies that they have the same number of elements. But one has $\latg{\sigr\,}\!=\!\set{X\!\cap\!\sw}{X\!\in\!\isl}$
and $\isl\!\subseteq\!\latg{\rel}$ according to \prettyref{lem:isgr},
consequently $\Bigl|\latg{\sigr\,}\Bigr|\!\leq\!\Bigl|\isl\Bigr|\!\leq\!\Bigl|\latg{\rel}\Bigr|$,
which implies that $\latg{\sigr\,}$ and $\latg{\rel}$ cannot be
isomorphic unless $\isl\!=\!\latg{\rel}$, hence for every $X\!\in\!\latg{\rel}$
there exists $X^{*}\!\subseteq\!\sw$ such that $X\!=\!\du{X^{*}}$.
This should be true in particular for the neighborhood $\du x$ of
any essential vertex $x\!\in\!\ess{\rel}$, and because $\du x$ is
meet-irreducible in that case, $\du x\!=\!\du{X^{*}}$ implies that
$x\!\in\!X^{*}\!\subseteq\!\sw$, proving that indeed $\ess{\rel}\!\subseteq\!\sw$
if $\latg{\sigr\,}$ and $\latg{\rel}$ are isomorphic.

As to the if part, since any lattice element can be written as the
meet of a suitable collection of meet-irreducible elements, and the
latter are, according to \prettyref{eq:clext}, all of the form $\du x$
for some $x\!\in\!\ess{\rel}$, the condition $\ess{\rel}\!\subseteq\!\sw$
means that for any $X\!\in\!\latg{\rel}$ there is a subset $X^{*}\!\subseteq\!\sw$
such that $X\!=\!\du{X^{*}}$, i.e. $\isl\!=\!\latg{\rel}$. This
implies $\cl{X\!\cap\!\sw}\!=\!X$ for $X\!\in\!\latg{\rel}$, according
to \prettyref{lem:isgr}, since $\du X\!\in\!\latg{\rel}\!=\!\isl$.
Consequently, we have a pair of surjective order-preserving maps
\[
\Map{\res}{\latg{\rel}}{\latg{\sigr\,}}X{X\!\cap\!\sw}\textrm{\ensuremath{\qquad}~~and~~\ensuremath{\qquad}\ensuremath{\Map{\boldsymbol{\Updelta}}{\latg{\sigr\,}}{\latg{\rel}}X{\cl X}}}
\]
that are mutually inverse to each other (i.e. lattice isomorphisms),
and which are compatible with the respective duality maps since
\[
\left(\sigr\circ\res\right)\!\left(X\right)\!=\!\sigr\!\left(X\!\cap\!\sw\right)\!=\!\sigr\!\left(X\right)\!=\!\rel\!\left(X\right)\!\cap\!\sw\!=\!\left(\res\circ\rel\right)\!\left(X\right)
\]
as follows once again from \prettyref{lem:isgr}.
\end{proof}
\prettyref{thm:isgisom} highlights the importance of essential vertices
and the corresponding induced subgraph, leading to the following notion.
\begin{defn}
\label{def:raddef}The \emph{radical} $\rad{\rel}$ of the irreducible
graph $\rel$ is the induced subgraph corresponding to the collection
of its essential vertices. 
\end{defn}
\begin{lem}
The radical of an irreducible graph is itself irreducible.
\end{lem}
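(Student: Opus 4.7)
The plan is to deduce irreducibility of $\rad{\rel}$ directly from the machinery already developed, specifically from \prettyref{thm:isgisom}. Applied with the choice $\sw\!=\!\ess{\rel}$, that theorem yields a lattice isomorphism $\res:\latg{\rel}\!\to\!\latg{\rad{\rel}}$ given by $X\!\mapsto\!X\!\cap\!\ess{\rel}$; in particular, this restriction map is injective on $\latg{\rel}$.

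I would then argue by contradiction. Suppose that two distinct vertices $x,y\!\in\!\ess{\rel}$ were equilocal in $\rad{\rel}$, meaning they share the same neighborhood in the induced subgraph. By the defining property of induced subgraphs this reads $\rel\!\left(x\right)\!\cap\!\ess{\rel}\!=\!\rel\!\left(y\right)\!\cap\!\ess{\rel}$, that is, $\res\!\left(\rel\!\left(x\right)\right)\!=\!\res\!\left(\rel\!\left(y\right)\right)$. The neighborhoods $\rel\!\left(x\right)$ and $\rel\!\left(y\right)$ both lie in $\latg{\rel}$ by \prettyref{lem:dualclosed}, so the injectivity of $\res$ forces $\rel\!\left(x\right)\!=\!\rel\!\left(y\right)$, contradicting the irreducibility of $\rel$. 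Hence every equilocality class of $\rad{\rel}$ is a singleton.

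There is essentially no obstacle: the entire content of the claim is packaged into \prettyref{thm:isgisom}, and the only care required is to match the notion of neighborhood in the induced subgraph $\rad{\rel}$ with the formula $X\!\mapsto\!X\!\cap\!\ess{\rel}$ defining $\res$. Once that bookkeeping is in place, the injectivity of a lattice isomorphism does all the work.
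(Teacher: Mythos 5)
Your argument is correct and complete: \prettyref{thm:isgisom}, applied with the collection of vertices taken to be $\ess{\rel}$, does the work, since its proof exhibits the restriction $X\mapsto X\cap\ess{\rel}$ as a lattice isomorphism from $\latg{\rel}$ onto $\latg{\rad{\rel}}$; the neighborhoods $\rel\!\left(x\right)$ lie in $\latg{\rel}$ by \prettyref{lem:dualclosed}, the neighborhood of $x$ in the induced subgraph is exactly $\rel\!\left(x\right)\cap\ess{\rel}$, and injectivity plus the irreducibility of $\rel$ finishes the proof. One small point worth flagging: you rely not merely on the statement of \prettyref{thm:isgisom} (which only asserts that an isomorphism exists) but on the explicit form of the isomorphism constructed in its proof; this is legitimate, since the ``if'' part of that proof shows precisely that $X\mapsto X\cap\ess{\rel}$ and the closure map are mutually inverse. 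Your route is genuinely more explicit than the paper's, which disposes of the lemma in a single sentence recalling only that a vertex is essential iff the least closed set containing it is join-irreducible, and leaves implicit the actual issue --- that two distinct essential vertices cannot acquire equal neighborhoods after restricting to $\ess{\rel}$. Your version makes that step visible and is, if anything, a more convincing justification than the one printed in the paper.
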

\begin{proof}
According to \prettyref{def:essdef}, a vertex is essential iff the
smallest element of the associated lattice containing it is join-irreducible.
\end{proof}
Note that the radical need not be connected, and that a $\stg$ (all
of whose vertices are essential) coincides with its radical.
\begin{lem}
\label{lem:radiso}An isomorphism between irreducible graphs induces
an isomorphism between their radicals.
\end{lem}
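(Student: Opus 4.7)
The plan is to leverage \prettyref{lem:graphiso}, which already produces a lattice isomorphism $X\!\mapsto\!\phi(X)$ between $\latg{\rel_{1}}$ and $\latg{\rel_{2}}$ from any graph isomorphism $\phi$ between the irreducible graphs $\rel_{1}$ and $\rel_{2}$. The goal is to argue that this induced lattice isomorphism carries the join-irreducible elements of $\latg{\rel_{1}}$ to those of $\latg{\rel_{2}}$, and then translate this back to a statement about essential vertices.

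First I would verify that for every vertex $x$ of $\rel_{1}$ one has $\phi\!\left(\cl x\right)\!=\!\cl{\phi\!\left(x\right)}$. This is immediate from the identity $\cl x\!=\!\du{\star x}$ proved in the course of \prettyref{lem:dpc}, together with the compatibility $\phi\!\left(\rel_{1}\!\left(Y\right)\right)\!=\!\rel_{2}\!\left(\phi\!\left(Y\right)\right)$ supplied by \prettyref{eq:griso}. Any order-isomorphism of lattices preserves join-irreducibility (since it preserves arbitrary joins), so combining this with \prettyref{def:essdef}, which characterizes essentiality of $x$ by join-irreducibility of $\cl x$, gives $\phi\!\left(\ess{\rel_{1}}\right)\!=\!\ess{\rel_{2}}$.

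Consequently, the restriction of $\phi$ to $\ess{\rel_{1}}$ is a bijection onto $\ess{\rel_{2}}$. Two essential vertices $x,y\!\in\!\ess{\rel_{1}}$ are adjacent in the radical $\rad{\rel_{1}}$ precisely when they are adjacent in $\rel_{1}$, which by the graph isomorphism property happens iff $\phi\!\left(x\right)$ and $\phi\!\left(y\right)$ are adjacent in $\rel_{2}$, i.e.\ in $\rad{\rel_{2}}$. Hence $\phi$ restricts to a graph isomorphism $\rad{\rel_{1}}\!\to\!\rad{\rel_{2}}$, as required.

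I do not anticipate a genuine obstacle here; the one subtle point is that essentiality is a lattice-theoretic property, defined via $\latg{\rel}$, so without passing through \prettyref{lem:graphiso} there is no direct way to see that $\phi$ preserves it. The argument therefore hinges entirely on the identification $\phi\!\left(\cl x\right)\!=\!\cl{\phi\!\left(x\right)}$ and on the elementary fact that lattice isomorphisms preserve join-irreducibility.
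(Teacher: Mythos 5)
Your proof is correct, but it takes a genuinely different route from the paper's. The paper's own proof is a one-liner: it invokes \prettyref{prop:irrcrit}, the purely graph-theoretic characterization of essential vertices (a vertex is essential iff its neighborhood is properly contained in the intersection of all vertex-neighborhoods properly containing it), from which it is immediate that a graph isomorphism carries essential vertices to essential vertices, and hence restricts to an isomorphism of the induced subgraphs. You instead work directly from \prettyref{def:essdef}: you push the graph isomorphism through to the associated lattices via \prettyref{lem:graphiso}, check the compatibility $\phi\!\left(\cl x\right)\!=\!\cl{\phi\!\left(x\right)}$ using $\cl x\!=\!\du{\star x}$ and \prettyref{eq:griso}, and use the elementary fact that a lattice isomorphism preserves join-irreducibility. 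Both arguments are sound. The paper's version is shorter only because the real work was front-loaded into \prettyref{prop:irrcrit}; your version bypasses that proposition entirely and needs only \prettyref{lem:graphiso}, at the cost of explicitly verifying that the induced lattice isomorphism sends the join-irreducible generator $\cl x$ to $\cl{\phi\!\left(x\right)}$. One could even say your route makes the logical dependence clearer, since it shows the lemma follows from the lattice-theoretic definition alone, without appealing to the graph-theoretic reformulation.
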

\begin{proof}
It follows from \prettyref{prop:irrcrit} that essential vertices
are mapped to essential ones by a graph isomorphism, hence the corresponding
induced subgraphs (the radicals) are isomorphic too.
\end{proof}
\begin{thm}
\label{thm:latisom}Two irreducible graphs have isomorphic associated
lattices iff their radicals are isomorphic.
\end{thm}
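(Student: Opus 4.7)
The plan is to prove both implications, leveraging \prettyref{thm:isgisom} for the easy direction and analyzing how a self-dual lattice isomorphism acts on join-irreducible elements for the harder one.

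For the \emph{if} direction, assume the radicals are isomorphic. Since by \prettyref{def:raddef} the radical is the induced subgraph on the essential vertices, \prettyref{thm:isgisom} applied with $\sw\!=\!\ess{\rel_i}$ yields $\latg{\rel_i}\!\cong\!\latg{\rad{\rel_i}}$ for $i\!=\!1,2$, and composing with the isomorphism $\latg{\rad{\rel_1}}\!\cong\!\latg{\rad{\rel_2}}$ supplied by \prettyref{lem:graphiso} gives $\latg{\rel_1}\!\cong\!\latg{\rel_2}$.

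For the \emph{only if} direction, suppose $\phi\!:\!\latg{\rel_1}\!\to\!\latg{\rel_2}$ is a self-dual lattice isomorphism. The key observation is that in an irreducible graph $\rel$ the assignment $x\!\mapsto\!\cl x$ is a bijection between $\ess{\rel}$ and the join-irreducible elements of $\latg{\rel}$: essentiality says that $\cl x$ is join-irreducible (\prettyref{def:essdef}); the duality-map image of \prettyref{lem:pclunion} shows that every join-irreducible element has the form $\cl x$ for some $x\!\in\!\ve$; and in an irreducible graph $\cl x\!=\!\cl y$ forces $\du x\!=\!\du y$ (via \prettyref{eq:cldu}) and hence $x\!=\!y$. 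Because $\phi$ commutes with the duality it swaps meet-irreducibles with join-irreducibles, and so maps the join-irreducibles of $\latg{\rel_1}$ bijectively onto those of $\latg{\rel_2}$, inducing a bijection $\tilde\phi\!:\!\ess{\rel_1}\!\to\!\ess{\rel_2}$ characterized by $\phi(\cl x)\!=\!\cl{\tilde\phi(x)}$.

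It remains to verify that $\tilde\phi$ is a graph isomorphism between $\rad{\rel_1}$ and $\rad{\rel_2}$. I would translate adjacency into lattice language: for $x,y\!\in\!\ess{\rel_1}$, adjacency in $\rel_1$ (equivalently in the induced subgraph $\rad{\rel_1}$) is by \prettyref{lem:dpc} equivalent to $\cl x\!\subseteq\!\du y$ in $\latg{\rel_1}$. Applying $\phi$ and exploiting its compatibility with the duality map, this condition transports to $\cl{\tilde\phi(x)}\!\subseteq\!\du{\tilde\phi(y)}$ in $\latg{\rel_2}$, which recovers adjacency of $\tilde\phi(x)$ and $\tilde\phi(y)$ in $\rel_2$. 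The main obstacle is the surjectivity part of the bijection $x\!\mapsto\!\cl x$ onto the join-irreducibles, which ultimately rests on the dualized form of \prettyref{lem:pclunion} together with irreducibility; once this correspondence and the duality compatibility of $\phi$ are in hand, the adjacency translation is essentially a restatement of the definitions.
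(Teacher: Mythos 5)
Your proof is correct, and your \emph{only if} direction takes a genuinely different route from the paper's. The \emph{if} direction is the same in both: \prettyref{thm:isgisom} plus \prettyref{lem:graphiso}. For the converse, the paper does not construct the isomorphism of radicals directly; it shows instead that the radical of any irreducible graph $\rel$ is isomorphic to the radical of the duality graph of $L\!=\!\latg{\rel}$, via a counting argument --- by \prettyref{lem:dugrsub} and \prettyref{thm:isgisom}, $\rad{\rel}$ embeds as an induced subgraph of the duality graph with exactly $\norm{\ess{\rel}}$ vertices, and the radical of the duality graph is the unique smallest induced subgraph whose associated lattice is isomorphic to $L$ --- and then transports a lattice isomorphism to an isomorphism of duality graphs and applies \prettyref{lem:radiso}. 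You instead build the isomorphism of radicals explicitly: the map sending an essential vertex to the least closed set containing it is a bijection onto the join-irreducible lattice elements (injective by irreducibility via \prettyref{eq:cldu}, surjective by the dual of \prettyref{lem:pclunion}); a lattice isomorphism carries join-irreducibles to join-irreducibles; and compatibility with the duality maps converts the lattice-theoretic form of adjacency (the least closed set containing $x$ lies below the dual of the least closed set containing $y$, by \prettyref{lem:dpc} and \prettyref{eq:cldu}) into adjacency of the images. Your route is more constructive and avoids the uniqueness-and-counting step, at the cost of redoing by hand the correspondence between essential vertices and join-irreducibles that the paper packages into \prettyref{lem:dugress} and \prettyref{thm:dugr}. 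One small inaccuracy: you say $\phi$ ``swaps meet-irreducibles with join-irreducibles'' because it commutes with the duality --- in fact any lattice isomorphism preserves join-irreducibility outright (it is the duality map \emph{within} a single lattice that exchanges the two kinds); this does not affect the argument, since all you use is that $\phi$ restricts to a bijection between the sets of join-irreducibles, which is true.
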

\begin{proof}
That irreducible graphs having isomorphic radicals have isomorphic
associated lattices follows from \prettyref{thm:isgisom} and \prettyref{lem:graphiso}.
To prove the converse, we shall show that the radical of an irreducible
graph is isomorphic with that of the $\dugr$ of its associated lattice:
this implies the assertion, because an isomorphism between self-dual
lattices induces obviously an isomorphism between their $\dugr$s,
hence between their radicals by \prettyref{lem:radiso}.

Let $L\!=\!\latg{\rel}$ denote the associated lattice of the irreducible
graph $\rel$. The radical $\rad{\dgr}$ of the $\dugr$ $\rel_{L}$
is the unique induced subgraph of $\rel_{L}$ with the least number
of vertices that has an associated lattice isomorphic to $L$. On
the other hand, the associated lattice of $\rad{\rel}$ is also isomorphic
to $L$ according to \prettyref{thm:isgisom}, hence the radical $\rad{\rel}$
is isomorphic to an induced subgraph of $\rel_{L}$ by \prettyref{thm:dugr}
having $\bigl|\ess{\rel}\bigr|$ vertices. Since this equals the number
$\bigl|\ess{\rel_{L}}\bigr|$ of vertices of $\rad{\dgr}$, the graphs
$\rad{\rel}$ and $\rad{\dgr}$ are isomorphic.
\end{proof}
The above results show that there is a one-to-one correspondence between
(isomorphism classes of) stem graphs and self-dual lattices: to each
stem graph corresponds its associated lattice, and to each self-dual
lattice corresponds a stem graph, the radical of its $\dugr$. This
means that in order to classify self-dual lattices one could instead
classify stem graphs, which could prove easier through the use of
suitable graph-theoretic techniques. To this end one should note that,
while a stem graph need not be connected, it is the disjoint union
of its connected components. The connected stem graphs with less
than 4 vertices are displayed on \prettyref{fig:stem}.

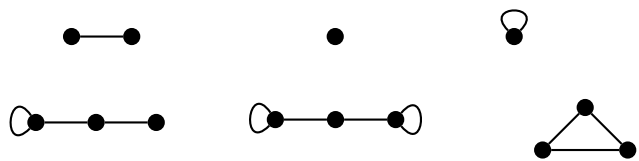
\begin{figure}
\begin{doublespace}
\begin{centering}
~~~~\begin{tikzpicture}[auto,node distance=0.8cm,
every loop/.style={}, thick,
ess/.style={circle,draw, fill, inner sep=0pt,minimum size=2mm},
main/.style={circle,draw, inner sep=0pt,minimum size=5mm}]
\node[ess] (1) {};   
\node[ess] (2) [right of=1] {};
\path[every node/.style={font=\sffamily\small}]        
(1) edge node {} (2);
\end{tikzpicture}~~~~~~~~~~~~~~~\begin{tikzpicture}[auto,node distance=0.8cm,
every loop/.style={}, thick,
ess/.style={circle,draw, fill, inner sep=0pt,minimum size=2mm},
main/.style={circle,draw, inner sep=0pt,minimum size=5mm}]
\node[ess] (1) {};
\end{tikzpicture}~~~~~~~~~~~\begin{tikzpicture}[auto,node distance=0.8cm,
every loop/.style={}, thick,
ess/.style={circle,draw, fill, inner sep=0pt,minimum size=2mm},
main/.style={circle,draw, inner sep=0pt,minimum size=5mm}]
\node[ess] (1) {};
\path[every node/.style={font=\sffamily\small}]        
(1) edge  [loop] node {} (1);
\end{tikzpicture}
\par\end{centering}
\end{doublespace}
\begin{centering}
~~~~~~\begin{tikzpicture}[auto,node distance=0.8cm,
every loop/.style={}, thick,
ess/.style={circle,draw, fill, inner sep=0pt,minimum size=2mm},
main/.style={circle,draw, inner sep=0pt,minimum size=5mm}]
\node[ess] (1) {};   
\node[ess] (2) [right of=1] {};
\node[ess] (3) [right of=2] {};
\path[every node/.style={font=\sffamily\small}]        
(1) edge [out=125, in=225, loop] node  {} (1)
edge node {} (2)
(2) edge node {} (3);
\end{tikzpicture}~~~~~~\begin{tikzpicture}[auto,node distance=0.8cm,
every loop/.style={}, thick,
ess/.style={circle,draw, fill, inner sep=0pt,minimum size=2mm},
main/.style={circle,draw, inner sep=0pt,minimum size=5mm}]
\node[ess] (1) {};   
\node[ess] (2) [right of=1] {};
\node[ess] (3) [right of=2] {};
\path[every node/.style={font=\sffamily\small}]        
(1) edge [out=125, in=225, loop] node  {} (1)
edge node {} (2)
(2) edge node {} (3)
(3) edge [out=50, in=310, loop] node {} (3);
\end{tikzpicture}~~~~~~~\begin{tikzpicture}[auto,node distance=0.8cm,
every loop/.style={}, thick,
ess/.style={circle,draw, fill, inner sep=0pt,minimum size=2mm},
main/.style={circle,draw, inner sep=0pt,minimum size=5mm}]
\node[ess] (1) {};
\node [ess] (0) [below left of=1] {};   
\node[ess] (2) [below right of=1] {};
\path[every node/.style={font=\sffamily\small}]        
(0) edge node {} (1)
edge  node {} (2)
(1) edge node {} (2);
\end{tikzpicture}\caption{\label{fig:stem}All connected stem graphs (up to isomorphism) with
less than 4 vertices.}
\par\end{centering}
\end{figure}

The number of connected stem graphs grows steadily with the number
of vertices, as illustrated in \prettyref{tab:graph_numbers}: according
to the above, this is equal to the number of isomorphism classes of
self-dual lattices with a given number of join-irreducible element.
The table also gives the number of non-isomorphic irreducible graphs
with a modular associated lattice; as we shall see later, these are
those that can (in principle) arise as locality diagrams of conformal
models.

\begin{table}

\begin{centering}
\begin{tabular}{|c||r|r|r|}
\hline 
\# vertices & \# irreducible & \# stem & \# modular \tabularnewline
\hline 
\hline 
1 & 2 & 2 & 2\tabularnewline
\hline 
2 & 2 & 1 & 2\tabularnewline
\hline 
3 & 6 & 3 & 6\tabularnewline
\hline 
4 & 31 & 18 & 24\tabularnewline
\hline 
5 & 230 & 140 & 95\tabularnewline
\hline 
6 & 2683 & 1716 & 439\tabularnewline
\hline 
7 & 50922 & 33448 & 2362\tabularnewline
\hline 
\end{tabular}\caption{\label{tab:graph_numbers}Number of non-isomorphic connected graphs
of different types as a function of the number of vertices (the last
column giving the number of irreducible graphs with a modular associated
lattice).}
\par\end{centering}
\end{table}

The following result provides a partial converse to \prettyref{lem:dugrsub}.
\begin{lem}
\label{lem:dusubirr}If the collection $\sw\!\subseteq\!L$ contains
all join-irreducible elements of the self-dual lattice $L$, then
the induced subgraph $\dgr^{\sw}$ of the duality graph is irreducible.
\end{lem}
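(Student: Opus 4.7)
The plan is to mimic the proof of \prettyref{cor:dugrirr}, but with the induced subgraph $\dgr^{\sw}$ in place of the full $\dugr$ $\dgr$. Irreducibility of $\dgr^{\sw}$ means that distinct vertices in $\sw$ have distinct neighborhoods, so I would pick $a,b\!\in\!\sw$ with $\dgr^{\sw}\!\left(a\right)\!=\!\dgr^{\sw}\!\left(b\right)$ and aim to deduce $a\!=\!b$; since the duality map is involutive, it is enough to derive $\duc a\!=\!\duc b$ from the assumed equality of neighborhoods.

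The heart of the argument is the analogue of \prettyref{lem:dungjoin} for the induced subgraph, namely the identity $\FE{\dgr^{\sw}\!\left(a\right)}\!=\!\duc a$ for every $a\!\in\!\sw$, even though $\dgr^{\sw}\!\left(a\right)$ can be a proper subset of $\dgr\!\left(a\right)$. Once this is in hand, the chain
\[
\duc a\!=\!\FE{\dgr^{\sw}\!\left(a\right)}\!=\!\FE{\dgr^{\sw}\!\left(b\right)}\!=\!\duc b
\]
settles the claim at once.

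To establish this key step I would use the standard fact (valid in any finite lattice, and implicit elsewhere in the paper) that every element is the join of the join-irreducibles beneath it. Applied to $\duc a$, this produces a set $J$ of join-irreducible elements with $\FE J\!=\!\duc a$ and $c\!\leq\!\duc a$ for each $c\!\in\!J$. The hypothesis that $\sw$ contains every join-irreducible forces $J\!\subseteq\!\sw$, and the adjacency criterion of \prettyref{def:dugr} then places each $c\!\in\!J$ inside $\dgr^{\sw}\!\left(a\right)$, so $\duc a\!=\!\FE J\!\leq\!\FE{\dgr^{\sw}\!\left(a\right)}$. The reverse inequality is immediate since every vertex of $\dgr^{\sw}\!\left(a\right)$ is by definition bounded above by $\duc a$. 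I do not foresee any real obstacle; the one nuance worth flagging is that the hypothesis on $\sw$ enters precisely to keep the join-irreducible decomposition of $\duc a$ intact after passing to the induced subgraph, ensuring that no information about $\duc a$ is lost when one reads off $\dgr^{\sw}\!\left(a\right)$ instead of $\dgr\!\left(a\right)$.
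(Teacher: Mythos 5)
Your proposal is correct and follows essentially the same route as the paper: both arguments establish $\bigvee\dgr^{\sw}\!\left(a\right)\!=\!\duc a$ by exhibiting a subset of $\sw$ lying inside $\dgr^{\sw}\!\left(a\right)$ whose join is already $\duc a$, and then conclude via the involutivity of the duality map. The only cosmetic difference is that you take this subset to be explicitly the join-irreducibles below $\duc a$, whereas the paper merely asserts the existence of some $A\!\subseteq\!\sw$ with $\bigvee A\!=\!\duc a$.
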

\begin{proof}
Since $\sw\!\subseteq\!L$ contains all join-irreducible elements
of $L$, for every $a\!\in\!L$ there exists a subset $A\!\subseteq\!\sw$
such that $\bigvee A\!=\!\duc a$, and because $\dgr^{\sw}\!\left(a\right)\!=\!\set{b\!\in\!\sw}{b\!\leq\!\duc a}$,
one has $A\!\subseteq\!\dgr^{\sw}\!\left(a\right)$. But this implies
that $\duc a\!=\!\bigvee A\!\leq\!\bigvee\dgr^{\sw}\!\left(a\right)\!\leq\!\duc a$,
i.e. $\bigvee\dgr^{\sw}\!\left(a\right)\!=\!\duc a$, and this proves
that indeed $\dgr^{\sw}\!\left(a\right)\!=\!\dgr^{\sw}\!\left(b\right)$
iff $a\!=\!b$.
\end{proof}
While the induced subgraphs of the $\dugr$ corresponding to different
subsets $\ess{\dgr}\!\subseteq\!\sw\!\subseteq\!L$ comprise a full
list of those irreducible graphs whose associated lattice is isomorphic
with $L$, there might exist non-trivial isomorphisms between such
graphs, corresponding to suitable automorphisms of the self-dual lattice
$L$. This issue is settled by the following result.
\begin{prop}
\label{prop:graphiso} Given two subsets $\sw_{1},\sw_{2}\!\subseteq\!L$
of a self-dual lattice $L$ that contain all join-irreducible elements,
the induced subgraphs of the $\dugr$ $\dgr$ corresponding to $\sw_{1}$
and $\sw_{2}$ are isomorphic iff there exists an automorphism $\sigma\!\in\!\aut L$
such that $\sw_{2}\!=\!\sigma\!\left(\sw_{1}\right)$.
\end{prop}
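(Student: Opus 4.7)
The plan is to dispatch the two implications separately. The ``if'' direction is immediate: any $\sigma\!\in\!\aut L$ preserves order and commutes with the duality map $a\!\mapsto\!\duc a$, so it preserves the adjacency relation $a\!\leq\!\duc b$ defining $\dgr$; restricting $\sigma$ to $\sw_1$ yields at once a graph isomorphism $\dgr^{\sw_1}\!\to\!\dgr^{\sigma\!\left(\sw_1\right)}\!=\!\dgr^{\sw_2}$.

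For the ``only if'' direction, suppose $\phi\!:\!\dgr^{\sw_1}\!\to\!\dgr^{\sw_2}$ is a graph isomorphism. By \prettyref{lem:dugress} the essential vertices of $\dgr$ coincide with the join-irreducibles of $L$, so the hypothesis ensures $\ess{\dgr}\!\subseteq\!\sw_i$ for $i\!=\!1,2$. Consequently, \prettyref{lem:dusubirr} guarantees the irreducibility of each $\dgr^{\sw_i}$, and composing the canonical isomorphism from \prettyref{thm:dugr} with the restriction isomorphism from \prettyref{thm:isgisom} produces a self-dual lattice isomorphism $\Phi_{i}\!:\!L\!\to\!\latg{\dgr^{\sw_i}}$ explicitly given by $a\!\mapsto\!\set{b\!\in\!\sw_i}{b\!\leq\!a}$. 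Simultaneously, \prettyref{lem:graphiso} upgrades $\phi$ to a self-dual lattice isomorphism $\phi_{*}\!:\!\latg{\dgr^{\sw_1}}\!\to\!\latg{\dgr^{\sw_2}}$ acting by $X\!\mapsto\!\phi\!\left(X\right)$.

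The natural candidate is $\sigma\!:=\!\Phi_2^{-1}\!\circ\!\phi_{*}\!\circ\!\Phi_1$: as a composite of self-dual lattice isomorphisms it automatically lies in $\aut L$, so the entire problem reduces to verifying $\sigma\!\left(v\right)\!=\!\phi\!\left(v\right)$ for each $v\!\in\!\sw_1$, from which $\sigma\!\left(\sw_1\right)\!=\!\phi\!\left(\sw_1\right)\!=\!\sw_2$ follows. This pointwise matching is where the main obstacle lies, and the crucial ingredient is to show that for every $v\!\in\!\sw_i$ the element $\Phi_i\!\left(v\right)\!=\!\set{b\!\in\!\sw_i}{b\!\leq\!v}$ coincides with the least element $\cl v$ of $\latg{\dgr^{\sw_i}}$ containing $v$. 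Applying \prettyref{lem:dpc} inside the subgraph $\dgr^{\sw_i}$ identifies this least element as $\set{y\!\in\!\sw_i}{\dgr^{\sw_i}\!\left(v\right)\!\subseteq\!\dgr^{\sw_i}\!\left(y\right)}$, and the containment $\dgr^{\sw_i}\!\left(v\right)\!\subseteq\!\dgr^{\sw_i}\!\left(y\right)$ is equivalent to $y\!\leq\!v$ precisely because $\sw_i$ contains every join-irreducible of $L$, so the principal ideal generated by $\duc v$ is faithfully recorded by its intersection with $\sw_i$. Once this identification is in hand, $\phi_{*}$ carries $\cl v$ in $\latg{\dgr^{\sw_1}}$ to $\cl{\phi\!\left(v\right)}$ in $\latg{\dgr^{\sw_2}}$, whence $\sigma\!\left(v\right)\!=\!\Phi_{2}^{-1}\!\left(\Phi_{2}\!\left(\phi\!\left(v\right)\right)\right)\!=\!\phi\!\left(v\right)$, completing the proof.
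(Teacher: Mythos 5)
Your proposal is correct and follows essentially the same route as the paper: the ``if'' direction by checking directly that $\sigma$ preserves the adjacency relation $a\!\leq\!\duc b$, and the ``only if'' direction by forming $\sigma\!=\!\Phi_{2}^{-1}\!\circ\phi_{*}\circ\Phi_{1}$ from the graph-induced lattice isomorphism of \prettyref{lem:graphiso} and the canonical isomorphisms $L\!\to\!\latg{\dgr^{\sw_{i}}}$ supplied by \prettyref{thm:dugr} and \prettyref{thm:isgisom}. The one place you go beyond the paper is in actually verifying the pointwise identity $\sigma\!\left(v\right)\!=\!\phi\!\left(v\right)$ for $v\!\in\!\sw_{1}$ --- via the identification of $\Phi_{i}\!\left(v\right)$ with the least closed set $\cl v$, which uses that $\sw_{i}$ contains all join-irreducibles --- a step the paper asserts without argument.
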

\begin{proof}
If $\sigma\!\in\!\aut L$ satisfies $\sw_{2}\!=\!\sigma\!\left(\sw_{1}\right)$,
then for $a\!\in\!\sw_{1}$ one has
\begin{align*}
\sigma\!\left(\sgr 1_{L}\!\left(a\right)\right)\! & =\!\set{\sigma\!\left(b\right)\!\!}{b\!\in\!\sw_{1},b\!\leq\!\duc a\!}\!=\!\set{b\!\in\!\sw_{2}\!}{b\!\leq\!\sigma\!\left(\duc a\right)}\\
 & =\set{b\!\in\!\sw_{2}\!}{b\!\leq\!\duc{\sigma\!\left(a\right)}\!}\!=\!\sgr 2_{L}\!\left(\sigma\!\left(a\right)\right)
\end{align*}
hence the restriction of $\sigma$ to $\sw_{1}$ provides a graph
isomorphism between the induced subgraphs $\sgr 1_{L}$ and $\sgr 2_{L}$. 

The other way round, if $\sgr 1_{L}$ and $\sgr 2_{L}$ are isomorphic,
i.e. there exists a bijective map $\map{\phi}{\sw_{1}}{\sw_{2}}$
such that $\phi\!\left(\sgr 1_{L}\!\left(a\right)\right)\!=\sgr 2_{L}\!\left(\phi\!\left(a\right)\right)$
for all elements $a\!\in\!\sw_{1}$, then this map can be extended
to an isomorphism $\map{\hat{\phi}}{\latg{\sgr 1_{L}}}{\latg{\sgr 2_{L}}}$
of lattices compatible with the duality maps according to \prettyref{lem:graphiso}.
On the other hand, by \prettyref{thm:isgisom} there exists lattice
isomorphisms $\map{\Upomega_{1}}L{\latg{\sgr 1_{L}}}$ and $\map{\Upomega_{2}}L{\latg{\sgr 2_{L}}}$.
It follows that the composite map $\sigma\!=\!\Upomega_{2}^{-1}\circ\hat{\phi}\circ\Upomega_{1}$
is an automorphism of $L$ that commutes with the duality map, $\sigma\!\left(\duc a\right)\!=\!\duc{\sigma\!\left(a\right)}$
for $a\!\in\!L$, and satisfies $\sw_{2}\!=\!\sigma\!\left(\sw_{1}\right)$
because $\sigma\!\left(a\right)\!=\!\phi\!\left(a\right)$ for $a\!\in\!\sw_{1}$. 
\end{proof}
\global\long\def\latorb#1{\uplambda\left(#1\right)}%
\global\long\def\essorb#1{\upepsilon\left(#1\right)}%

\begin{rem}
Based on the above, one can show that the number of non-isomorphic
irreducible graphs whose associated lattice is isomorphic with a given
the self-dual lattice $L$ equals
\[
\frac{1}{\norm{\aut L}}\sum_{\sigma\in\aut L}2^{\latorb{\sigma}-\essorb{\sigma}}
\]
where $\latorb{\sigma}$, resp. $\essorb{\sigma}$ denotes the number
of orbits of $\sigma\!\in\!\aut L$ on the set of all (resp. join-irreducible)
lattice elements, and the following 'mass formula' holds
\[
\sum_{\rel}\frac{1}{\norm{\aut{\rel}}}=\frac{1}{\norm{\aut L}}\left(\begin{array}{c}
\left|L\right|-\left|\mathcal{J}\right|\\
n-\left|\mathcal{J}\right|
\end{array}\right)
\]
where $\mathcal{J}$ denotes the set of join-irreducible elements
of $L$, and the sum runs over all irreducible graphs with $\left|\rel\right|\!=\!n$
vertices and associated lattice isomorphic to $L$.
\end{rem}
Finally, let's note the following interesting result, which proves
important in physics applications.

\global\long\def\dimf#1{\delta\!\left(#1\right)}%

\begin{lem}
\label{lem:dimbound}The length of a maximal chain ending at $X\!\in\!\latg{\rel}$
cannot exceed the number of essential classes contained in it.
\end{lem}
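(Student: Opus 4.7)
The plan is to connect the length of maximal chains to the number of join-irreducible elements lying below $X$ in $\latg{\rel}$, then translate the join-irreducibles into essential classes via the correspondence established by \prettyref{def:essdef}.

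First I would record the basic identification: by the definition of essential vertices, the join-irreducible elements of $\latg{\rel}$ are precisely the lattice elements of the form $\cl x$ with $x\!\in\!\ess{\rel}$. Since $\cl x\!=\!\du{\star x}$ depends only on the neighborhood $\star x$, vertices in the same equilocality class give the same join-irreducible, so the assignment $\ecl\!\mapsto\!\cl x$ (for any $x\!\in\!\ecl$) is a bijection between essential classes and join-irreducibles of $\latg{\rel}$. Combining this with \prettyref{lem:dpc}, which gives $\cl x\!\leq\!X$ iff $x\!\in\!X$, the join-irreducibles contained in $X\!\in\!\latg{\rel}$ correspond bijectively to the essential classes $\ecl\!\in\!\defl\!\left(X\right)$, since essential classes, being equilocality classes, are either wholly inside $X$ or wholly outside it by \prettyref{cor:ecldecomp}.

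Next I would carry out the standard finite-lattice argument bounding chain length by the number of join-irreducibles. Given a maximal (saturated) chain $X_{0}\!\lessdot\!X_{1}\!\lessdot\!\cdots\!\lessdot\!X_{n}\!=\!X$ starting at the minimum of $\latg{\rel}$, at each covering step $X_{i-1}\!\lessdot\!X_{i}$ we have $X_{i}\!\neq\!X_{i-1}$, so $X_{i}$ cannot be the join of those join-irreducibles of $\latg{\rel}$ that are already $\leq\!X_{i-1}$; hence there exists a join-irreducible $j_{i}$ with $j_{i}\!\leq\!X_{i}$ and $j_{i}\!\not\leq\!X_{i-1}$. The $j_{i}$ are pairwise distinct, for if $j_{i}\!=\!j_{k}$ with $k\!<\!i$ then $j_{i}\!=\!j_{k}\!\leq\!X_{k}\!\leq\!X_{i-1}$, contradicting the choice of $j_{i}$. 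Thus we obtain $n$ distinct join-irreducibles, all lying below $X$, so $n$ is bounded by the total number of join-irreducibles of $\latg{\rel}$ below $X$, which by the first paragraph equals $\bigl|\defl\!\left(X\right)\!\cap\!\ess{\rel}\bigr|$, i.e.\ the number of essential classes contained in $X$.

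I do not expect any serious obstacle: the identification of join-irreducibles with essential classes is already built into \prettyref{def:essdef}, and the bound of chain length by the number of join-irreducibles is a routine covering-relation argument in any finite lattice. The only subtle point worth spelling out is that $\cl x$ is constant on equilocality classes (so that "essential vertex" and "essential class" give the same count of join-irreducibles), which follows immediately from $\cl x\!=\!\du{\star x}$ and the defining equality $\star x\!=\!\star y$ for equilocal $x,y$.
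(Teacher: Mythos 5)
Your proof is correct, and it takes a genuinely different route from the paper's. You bound the chain length by the number of join-irreducible elements of $\latg{\rel}$ lying below $X$, via the standard finite-lattice argument that each covering step must pick up a fresh join-irreducible, and then translate join-irreducibles below $X$ into essential classes contained in $X$ (legitimately: by \prettyref{lem:dpc} every closed set is the join of the minimal closed sets generated by its vertices, so every join-irreducible arises from a vertex, \prettyref{def:essdef} singles out exactly the essential ones, and equilocal vertices give the same minimal closed set). The paper argues differently: it counts \emph{all} equilocality classes contained in each member of the chain, observes that this count strictly increases along a chain and hence bounds the length by the number of equilocality classes in $X$, and only then sharpens ``equilocality classes'' to ``essential classes'' by transferring the whole statement to the radical $\rad{\rel}$ through the isomorphism of associated lattices provided by \prettyref{thm:isgisom}. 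Your version is more self-contained --- it never leaves $\latg{\rel}$ and needs neither the radical nor \prettyref{thm:isgisom} --- at the cost of invoking the join-irreducible decomposition; the paper's version gets by with a two-line counting argument but leans on the heavier structural result about the radical. The one point you should make fully explicit is that every element of the finite lattice $\latg{\rel}$ is the join of the join-irreducibles below it (an easy induction from \prettyref{lem:dpc}), since that is precisely what forces each covering step to produce a new join-irreducible.
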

\begin{proof}
Let $\dimf X$ denote the number of equilocality classes contained
in $X\!\in\!\latg{\rel}$, which makes sense thanks to \prettyref{lem:pclunion}.
We claim that the length of a maximal chain ending at $X$ cannot
exceed $\dimf X$. Indeed, $\dimf Y\!<\!\dimf X$ for any $Y\!\in\!\latg{\rel}$
properly contained in $X$, hence for a maximal chain $Y_{0}\!\subset\!Y_{1}\!\subset\!\cdots\!\subset\!Y_{n}\!=\!X$
ending at $X$ one has $\dimf{Y_{0}}\!<\!\dimf{Y_{1}}\!<\cdots<\dimf X$,
and by taking into account that the values $\dimf X$ are non-negative
integers this proves $n\!\leq\!\dimf X$. Since the associated lattices
of $\rel$ and $\rad{\rel}$ are isomorphic by \prettyref{thm:isgisom},
and all equilocality classes of the latter are essential, applying
this result to the radical $\rad{\rel}$ instead of $\rel$ proves
the claim.
\end{proof}

\global\long\def\ve{\mathtt{V}}%
\global\long\def\FA#1{\vert#1\vert}%

\global\long\def\FB#1{\mathtt{Z}^{2}\!(#1)}%

\global\long\def\FC#1{#1^{{\scriptscriptstyle \flat}}}%

\global\long\def\FD#1{#1^{{\scriptscriptstyle \times}}}%

\global\long\def\FE#1{\mathtt{x}_{#1}}%

\global\long\def\FF#1#2{\mathrm{Fix}_{#1}\left(#2\right)}%

\global\long\def\FI#1{#1_{{\scriptscriptstyle \pm}}}%

\global\long\def\cl#1{\mathscr{C}\negthinspace\ell\!\left(#1\right)}%

\global\long\def\bl#1{\mathcal{B}\ell\!\left(#1\right)}%
\global\long\def\clmap#1{\mathscr{C}\negthinspace\ell{}^{\,(#1)}}%

\global\long\def\bbl#1#2{\mathcal{B}\ell_{#2}\left(#1\right)}%

\global\long\def\ver#1{\mathtt{Ver}_{#1}}%

\global\long\def\stab#1#2{#1_{#2}}%

\global\long\def\cft{\mathscr{C}}%

\global\long\def\gal#1{\upsigma_{#1}}%

\global\long\def\galpi#1{\boldsymbol{\uppi}{}_{#1}}%

\global\long\def\ann#1{\ker\mathfrak{#1}}%

\global\long\def\cent#1#2{\mathtt{C}_{#1}\!\left(#2\right)}%

\global\long\def\ch#1{\boldsymbol{\uprho}_{#1}}%

\global\long\def\qd#1{\mathtt{d}_{#1}}%

\global\long\def\cw#1{\mathtt{h}_{#1}}%

\global\long\def\tcl{\textrm{trivial class}}%

\global\long\def\ccl{\mathtt{C}}%

\global\long\def\zcl{\mathtt{z}}%

\global\long\def\om#1{\omega\!\left(#1\right)}%

\global\long\def\cs#1{\left\llbracket #1\right\rrbracket }%

\global\long\def\coc#1{\vartheta_{\mathfrak{#1}}}%

\global\long\def\sp#1#2{\bigl\langle#1,#2\bigr\rangle}%

\global\long\def\irr#1{\mathtt{Irr}\!\left(#1\right)}%

\global\long\def\sqf#1{#1^{\circ}}%

\global\long\def\fc{\textrm{FC set}}%

\global\long\def\to#1{\boldsymbol{\upsigma}\!\left(#1\right)}%

\global\long\def\v{\mathtt{{\scriptstyle 0}}}%

\global\long\def\wm#1{\mathcal{P}\!\left(#1\right)}%

\global\long\def\fm{\mathtt{N}}%

\global\long\def\du#1{#1^{{\scriptscriptstyle \perp}}}%

\global\long\def\lat{\mathcal{\mathscr{L}}}%

\global\long\def\vera{\mathcal{V}}%

\global\long\def\svera#1{\vera_{#1}}%

\global\long\def\zent#1{\mathtt{Z}\!\left(#1\right)}%

\global\long\def\zenta#1{\mathtt{Z}^{*}\!(#1)}%

\global\long\def\cech#1#2{\boldsymbol{\varpi}_{#1}\!\left(#2\right)}%

\global\long\def\lgr{\boldsymbol{\Lambda}}%

\global\long\def\qgr{\lgr^{\!\intercal}}%

\global\long\def\twa#1{\lgr\!\left(#1\right)}%

\global\long\def\cov#1{#1^{\intercal}}%

\global\long\def\zwa#1{\qgr\!\!\left(\mathfrak{#1}\right)}%

\global\long\def\idch{\boldsymbol{\mathfrak{1}}}%

\global\long\def\usub#1{\mathbf{\boldsymbol{\cup}}#1}%

\global\long\def\zquot#1#2{\mathfrak{#1}/\!#2}%

\global\long\def\extclass#1{#1\ccl}%

\global\long\def\dg#1{\hat{#1}}%

\global\long\def\loc{\lat_{\mathtt{loc}}}%
\global\long\def\intlat{\lat_{\mathtt{int}}}%

\global\long\def\rfc#1{\surd#1}%
\global\long\def\vfc{\mathfrak{o}}%

\global\long\def\sc{\mathcal{\cov{\vfc}}}%

\newcommandx\pwm[2][usedefault, addprefix=\global, 1=]{\boldsymbol{\Uppi}_{#1}^{(#2)}}%

\global\long\def\res#1#2{\mathtt{res}_{#2}^{#1}}%

\global\long\def\clos#1{\langle#1\rangle}%

\global\long\def\mzq#1{\boldsymbol{\mathfrak{Z}}#1}%

\global\long\def\rpc#1#2{#1\!\rightsquigarrow\!#2}%

\global\long\def\um#1{\mathcal{U}\!\left(#1\right)}%
\global\long\def\nsc#1#2#3{\boldsymbol{\mu}_{#3}\!\left(#1,#2\right)}%

\global\long\def\fmpow#1{\mathcal{M}\!\!\left(#1\right)}%

\global\long\def\spp#1#2{\mathcal{P}_{#1}\!\left(#2\right)}%

\global\long\def\qrel{\pitchfork}%

\global\long\def\grl{\textrm{group-like}}%
\global\long\def\fcs{\textrm{FC set}}%
\global\long\def\pfc{\textrm{principal }\fcs}%
\global\long\def\pcl{\textrm{equilocality class}}%

\global\long\def\qlgr{\textrm{quasi-locality graph}}%

\global\long\def\cmp#1{\textrm{(cf. [1], #1)}}%

\global\long\def\ex{\boldsymbol{\text{\upmu}}}%

\section{Locality graphs and FC sets}

Consider a (unitary) rational CFT \cite{BPZ,DiFrancesco-Mathieu-Senechal}.
We shall denote by $\qd p$ and $\mathtt{h}_{p}$ the quantum dimension
and conformal weight of a primary $p$, and by $\fm\!\left(p\right)$
the associated fusion matrix, whose matrix elements are given by the
fusion rules
\begin{equation}
\left[\fm\!\left(p\right)\right]_{q}^{r}=N_{pq}^{r}\label{eq:fmdef}
\end{equation}
$\v$ will denote the vacuum primary for which $\cw{\v}\!=\!0$ and
$\fm\!\left(\v\right)$ is the identity matrix (hence $\qd{\v}\!=\!1$).
Note that, since
\begin{equation}
\fm\!\left(p\right)\fm\!\left(q\right)=\sum_{r}N_{pq}^{r}\fm\!\left(r\right)\label{ver1}
\end{equation}
the fusion matrices generate a commutative matrix algebra over $\mathbb{C}$,
whose irreducible representations, all of dimension $1$, are in one-to-one
correspondence with the primaries. According to Verlinde's famous
result \cite{Verlinde1988}, to each primary $p$ corresponds an irreducible
representation $\ch p$ that assigns to the fusion matrix $\fm\!\left(q\right)$
the complex number
\begin{equation}
\ch p\!\left(q\right)\!=\!\sum_{r}N_{pq}^{r}\frac{\qd r}{\qd p}\mathtt{e}^{2\pi\mathtt{i}\left(\cw p+\cw q-\cw r\right)}\label{eq:chdef}
\end{equation}
Note that $\qd p\!=\!\ch{\v}\!\left(p\right)$, and one has the inequality
\begin{equation}
\FA{\ch p\!\left(q\right)}\!\leq\!\qd q\label{eq:ineq}
\end{equation}

The theory of orbifold deconstruction \cite{Bantay2019a,Bantay2020}
points to the importance of so-called fusion closed sets, or '$\fcs$s'
for short, i.e. sets of primaries containing the vacuum $\v$ and
closed under the fusion product. The collection $\lat$ of all $\fcs$s
(ordered by inclusion) forms a modular lattice \cite{Bantay2020a,Bantay2021}
that is also self-dual, i.e. comes equipped with a duality map sending
each $\fcs$ $\mathfrak{g}\!\in\!\lat$ to $\du{\mathfrak{g}}\!=\!\set p{\ch q\!\left(p\right)\!=\!\qd p\textrm{ for all }q\!\in\!\mathfrak{g}}$,
its so-called 'trivial class', which is itself an $\fcs$. An important
task is to describe the structure of this lattice $\lat$ characterizing
the different possible deconstructions of the model under study.

The problem is that a brute force approach to determine all $\fcs$s
of a given model can be prohibitively difficult. Indeed, the cost
of such a procedure is exponential in the number of primaries, and
it breaks down already for a couple of dozens of primary fields, while
truly interesting examples come with several hundreds, if not thousands
of them. This is where the previous graph-theoretic ideas come to
the rescue, as we shall now explain.

Recall that two primaries of a conformal model are mutually local
if their operator product expansion coefficients are single-valued
functions of separation \cite{Bantay2021a}. In other words, $p$
and $q$ are local if $\cw p\!+\!\cw q$ differs by an integer from
$\cw r$ for each primary $r$ such that $N_{pq}^{r}\!>\!0$; using
Verlinde's formula, this is equivalent to the requirement $\ch q\!\left(p\right)\!=\!\qd p$.
Clearly, mutual locality is a symmetric relation, to which is associated
an undirected graph $\lgr$, the so-called locality graph of the conformal
model, whose vertices correspond to the primary fields, with two vertices
adjacent if the corresponding primaries are mutually local\footnote{There is a closely related notion, that of the 'quasi-locality graph',
whose vertices still correspond to the primary fields, with two of
them adjacent if they saturate the bound \prettyref{eq:ineq}, i.e.
$\FA{\ch p\!\left(q\right)}\!=\!\qd q$; at the level of OPE, this
means that the expansion coefficients are no more necessarily single-valued,
but some power of them is. One can show that the associated lattice
of this graph is a sublattice of $\latg{\lgr}$, with elements corresponding
to maximal Abelian extensions of $\fcs$s (c.f. Section 5 of \cite{Bantay2020a}).}. 

In what follows, we shall make free use of results from \cite{Bantay2020a}
on $\fcs$s and their lattice $\lat$, especially the properties of
the duality map. The basic observation, motivating the present work,
is the following one.
\begin{lem}
\label{lem:cycfc}The vertex-neighborhood
\begin{equation}
\twa{\alpha}\!=\!\set p{\ch{\alpha}\!\left(p\right)\!=\!\qd p}\label{eq:twadef}
\end{equation}
of a primary $\alpha$ in the locality graph is an $\fcs$.
\end{lem}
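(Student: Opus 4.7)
The plan is to verify the two defining properties of an FC set: that $\twa{\alpha}$ contains the vacuum $\v$ and that it is closed under the fusion product.

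The vacuum condition is immediate: since $\fm(\v)$ is the identity matrix, the one-dimensional representation $\ch{\alpha}$ must send it to $1$, and $\qd{\v}\!=\!1$ as well, so $\ch{\alpha}(\v)\!=\!\qd{\v}$ and $\v\!\in\!\twa{\alpha}$.

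For fusion-closure, the key observation is that $\ch{\alpha}$ is an algebra homomorphism and $\qd p\!=\!\ch{\v}(p)$ for all primaries $p$. So, suppose $p,q\!\in\!\twa{\alpha}$, i.e.\ $\ch{\alpha}(p)\!=\!\qd p$ and $\ch{\alpha}(q)\!=\!\qd q$. Applying $\ch{\alpha}$ to \eqref{ver1} yields
\[
\qd p\,\qd q\!=\!\ch{\alpha}(p)\ch{\alpha}(q)\!=\!\sum_{r}N_{pq}^{r}\,\ch{\alpha}(r),
\]
while applying $\ch{\v}$ to the same identity gives $\qd p\,\qd q\!=\!\sum_{r}N_{pq}^{r}\,\qd r$. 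Subtracting and taking real parts leads to
\[
\sum_{r}N_{pq}^{r}\bigl(\qd r-\mathrm{Re}\,\ch{\alpha}(r)\bigr)\!=\!0.
\]
By \eqref{eq:ineq}, each summand is non-negative, and since the fusion coefficients $N_{pq}^{r}$ are non-negative integers, every $r$ with $N_{pq}^{r}\!>\!0$ must satisfy $\mathrm{Re}\,\ch{\alpha}(r)\!=\!\qd r$. Combined with $\FA{\ch{\alpha}(r)}\!\leq\!\qd r$, this forces $\ch{\alpha}(r)\!=\!\qd r$, i.e.\ $r\!\in\!\twa{\alpha}$.

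There is no genuinely hard step here; the only subtlety is the simultaneous use of the homomorphism property (with respect to both $\ch{\alpha}$ and $\ch{\v}$) together with the modulus bound \eqref{eq:ineq}, which together pin down $\ch{\alpha}(r)$ exactly rather than merely bounding its real part.
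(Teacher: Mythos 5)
Your proof is correct and follows essentially the same route as the paper: apply the representation $\ch{\alpha}$ to the fusion identity, compare with $\ch{\v}$, take real parts, and use the bound \eqref{eq:ineq} to force $\ch{\alpha}(r)\!=\!\qd r$ whenever $N_{pq}^{r}\!>\!0$. The only difference is that you also verify $\v\!\in\!\twa{\alpha}$ explicitly, a point the paper leaves tacit.
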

\begin{proof}
For $p,q\!\in\!\twa{\alpha}$ one has
\[
\sum_{r}N_{pq}^{r}\!\left(\qd r\!-\!\ch{\alpha}\!\left(r\right)\right)\!=\!\sum_{r}N_{pq}^{r}\qd r\!-\!\sum_{r}N_{pq}^{r}\ch{\alpha}\!\left(r\right)\!=\!\qd p\qd q\!-\!\ch{\alpha}\!\left(p\right)\!\ch{\alpha}\!\left(q\right)\!=\!0
\]
hence, after taking real parts,
\[
\sum_{r}N_{pq}^{r}\!\left(\qd r\!-\!\mathsf{Re}\left(\ch{\alpha}\!\left(r\right)\right)\right)=0
\]
All terms on the lhs. are non-negative since  $\mathsf{Re}\!\left(\ch{\alpha}\!\left(r\right)\right)\!\leq\!\FA{\ch{\alpha}\!\left(r\right)\!}\!\leq\!\qd r$,
and this implies that $N_{pq}^{r}\!=\!0$ unless $\ch{\alpha}\!\left(r\right)\!=\!\qd r$,
i.e. $r\!\in\!\twa{\alpha}$. 
\end{proof}
\begin{lem}
\label{lem:cycmin}$\du{\twa{\alpha}}$ is the smallest $\fc$ containing
the primary $\alpha$.
\end{lem}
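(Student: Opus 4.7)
The plan is to verify three things about $\du{\twa{\alpha}}$: that it is an \fcs, that it contains $\alpha$, and that it sits inside every other \fcs{} that contains $\alpha$.

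For the first, by the definition of the graph-theoretic dual one has $\du{\twa{\alpha}}\!=\!\bigcap_{p\in\twa{\alpha}}\twa{p}$. Each $\twa{p}$ is an \fcs{} by \prettyref{lem:cycfc}, the vacuum $\v$ lies in every \fcs{}, and closure under the fusion product is manifestly preserved by intersection; hence $\du{\twa{\alpha}}$ is itself an \fcs. For the second, the symmetry of the mutual-locality relation gives $p\!\in\!\twa{\alpha}$ iff $\alpha\!\in\!\twa{p}$, so $\alpha$ lies in every $\twa{p}$ with $p\!\in\!\twa{\alpha}$, hence in their intersection $\du{\twa{\alpha}}$.

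The real content is the minimality statement, for which I would combine the \fcs-theoretic duality on $\lat$ recalled from \cite{Bantay2020a} with the order-reversing character of $\du{\cdot}$. Given any \fcs{} $\mathfrak{g}$ containing $\alpha$, the description $\du{\mathfrak{g}}\!=\!\set{p}{\ch{q}\!\left(p\right)\!=\!\qd{p}\text{ for all }q\!\in\!\mathfrak{g}}$ applied to $q\!=\!\alpha$ yields $\du{\mathfrak{g}}\!\subseteq\!\twa{\alpha}$. Applying $\du{\cdot}$ again and using the self-duality $\du{\du{\mathfrak{g}}}\!=\!\mathfrak{g}$ of $\lat$ then gives $\du{\twa{\alpha}}\!\subseteq\!\du{\du{\mathfrak{g}}}\!=\!\mathfrak{g}$, as required.

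The only potentially delicate point is that two different duality maps (the graph-theoretic one on $\latg{\lgr}$ and the \fcs-theoretic one on $\lat$) are in principle in play; however, on the vertex-neighborhoods $\twa{p}$ they coincide by construction — each such set is simultaneously a neighborhood in the locality graph and an \fcs{} — so the shared notation $\du{\cdot}$ is unambiguous and no real obstacle arises. The main structural input is simply the self-duality of $\lat$, which is cited, and everything else is a one-step manipulation.
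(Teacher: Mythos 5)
Your argument is correct and follows essentially the same route as the paper's proof: membership of $\alpha$ via the symmetry of mutual locality, and minimality by showing $\du{\mathfrak{g}}\!\subseteq\!\twa{\alpha}$ for any $\fcs$ $\mathfrak{g}$ containing $\alpha$ and then applying the involutive, order-reversing duality of $\lat$. The only difference is that you additionally spell out why $\du{\twa{\alpha}}$ is itself an $\fcs$ (as an intersection of the neighborhoods $\twa{p}$) and why the graph-theoretic and $\fcs$-theoretic duals agree here, points the paper leaves implicit.
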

\begin{proof}
$p\!\in\!\du{\twa{\alpha}}$ precisely when $\ch p\!\left(q\right)\!=\!\qd q$
for all $q\!\in\!\twa{\alpha}$, and since this holds for $\alpha$
by definition, one has $\alpha\!\in\!\du{\twa{\alpha}}$. Moreover,
if $\mathfrak{g}\!\in\!\lat$ contains $\alpha$, then $p\!\in\!\du{\mathfrak{g}}$
implies $\ch p\!\left(\alpha\right)\!=\!\qd{\alpha}$, hence $\ch{\alpha}\!\left(p\right)\!=\!\qd p$,
that is $p\!\in\!\twa{\alpha}$; consequently, $\du{\mathfrak{g}}\!\subseteq\!\twa{\alpha}$,
which gives $\du{\twa{\alpha}}\!\subseteq\!\mathfrak{g}$ by duality,
proving the claim.
\end{proof}
\begin{lem}
\label{lem:cycsub}
\[
\du{\twa{\alpha}}\!=\!\set{\beta}{\twa{\alpha}\!\subseteq\!\twa{\beta}}
\]
\end{lem}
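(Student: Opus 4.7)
The plan is to reduce the claim to a direct application of \prettyref{lem:dpc} applied to the locality graph $\lgr$. The reduction rests on two observations. First, $\twa{\alpha}$ is literally the vertex-neighborhood of $\alpha$ in $\lgr$: by the definition of the locality graph, a primary $p$ is adjacent to $\alpha$ iff $\ch{\alpha}(p)\!=\!\qd{p}$, which is the defining condition \eqref{eq:twadef} of $\twa{\alpha}$. Second, the duality map $\mathfrak{g}\!\mapsto\!\du{\mathfrak{g}}$ on $\fcs$s coincides with the graph-theoretic dual of Section~\ref{sec:Graphs-and} applied to subsets of $\lgr$: the set $\du{\mathfrak{g}}\!=\!\set{p}{\ch{q}(p)\!=\!\qd{p}\textrm{ for all }q\!\in\!\mathfrak{g}}$ is exactly the set of vertices of $\lgr$ adjacent to every vertex of $\mathfrak{g}$, as required by \eqref{eq:dualdef}.

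With these identifications, the equality to be proved becomes a one-line consequence of the definitions. Indeed, by \eqref{eq:dualdef}, $\beta\!\in\!\du{\twa{\alpha}}$ holds iff every element of $\twa{\alpha}$ is adjacent to $\beta$ in $\lgr$, i.e.\ iff $\twa{\alpha}\!\subseteq\!\twa{\beta}$. Alternatively, the right-hand side is exactly the set that \eqref{eq:dpc} in \prettyref{lem:dpc} identifies as the least element of $\latg{\lgr}$ containing $\alpha$, and this least element is $\du{\twa{\alpha}}$ by \prettyref{lem:cycmin}, giving the claim by another route.

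There is essentially no obstacle to the argument; the whole purpose of developing the general duality formalism in Section~\ref{sec:Graphs-and} was precisely to make identities of this type automatic once the locality graph is in place. The only point that requires a moment of care is confirming that the $\fcs$-duality $\du{(\cdot)}$ and the graph-duality of Section~\ref{sec:Graphs-and} really do agree on subsets of primaries, which follows from the symmetry of the locality relation $\ch{\alpha}(p)\!=\!\qd{p}\Leftrightarrow\ch{p}(\alpha)\!=\!\qd{\alpha}$ together with \eqref{eq:twadef}.
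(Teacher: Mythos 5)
Your proof is correct, and it takes a genuinely different route from the paper's. The paper stays inside the $\fcs$ formalism: by \prettyref{lem:cycmin}, $\du{\twa{\beta}}$ is the least $\fcs$ containing $\beta$, so $\beta\!\in\!\du{\twa{\alpha}}$ forces $\du{\twa{\beta}}\!\subseteq\!\du{\twa{\alpha}}$, whence $\twa{\alpha}\!\subseteq\!\twa{\beta}$ by the order-reversing, involutive duality of $\lat$; the converse reuses the same two facts in the other order. You instead unwind definitions: $\twa{\alpha}$ is the neighborhood of $\alpha$ in $\lgr$, the trivial-class map agrees with the graph dual of \prettyref{eq:dualdef} on sets of primaries (via the symmetry $\ch{\alpha}\!\left(p\right)\!=\!\qd p\Leftrightarrow\ch p\!\left(\alpha\right)\!=\!\qd{\alpha}$), and then $\beta\!\in\!\du{\twa{\alpha}}$ literally reads $\twa{\alpha}\!\subseteq\!\twa{\beta}$, i.e.\ the lemma is \prettyref{eq:dpc} for the locality graph. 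Your version is more elementary --- it needs neither \prettyref{lem:cycmin} nor the lattice-theoretic properties of the duality map imported from the FC-set theory --- and it makes the graph-theoretic meaning of the statement explicit. What the paper's route buys is independence from the dictionary between the two duals, which is only formalized afterwards in \prettyref{thm:declat} (compare \prettyref{eq:cycdec2}); since that identification is nonetheless immediate from the stated formula for $\du{\mathfrak{g}}$ and \prettyref{eq:twadef}, there is no circularity in your argument, only a small reordering of the exposition, and you rightly flag this verification as the one point needing care.
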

\begin{proof}
$\beta\!\in\!\du{\twa{\alpha}}$ implies $\du{\twa{\beta}}\!\subseteq\!\du{\twa{\alpha}}$
by \prettyref{lem:cycmin}, hence $\twa{\alpha}\!\subseteq\!\twa{\beta}$;
conversely, $\twa{\alpha}\!\subseteq\!\twa{\beta}$ implies $\du{\twa{\beta}}\!\subseteq\!\du{\twa{\alpha}}$,
from which follows $\beta\!\in\!\du{\twa{\alpha}}$ because of $\beta\!\in\!\du{\twa{\beta}}$.
\end{proof}
\begin{cor}
\noindent If $\mathfrak{g}\!\in\!\lat$ is an $\fc$, then
\begin{equation}
\mathfrak{g}=\bigvee_{\alpha\in\mathfrak{g}}\du{\twa{\alpha}}\label{eq:cycdecomp}
\end{equation}
and
\begin{equation}
\du{\mathfrak{g}}=\bigcap_{\alpha\in\mathfrak{g}}\twa{\alpha}\label{eq:cycdec2}
\end{equation}
\end{cor}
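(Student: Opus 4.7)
The plan is to deduce both identities directly from the preceding lemmas, with the first one essentially a repackaging of \prettyref{lem:cycmin} and the second obtained by applying the duality map.

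For \prettyref{eq:cycdecomp}, I will establish the two inclusions separately. For the inclusion $\bigvee_{\alpha\in\mathfrak{g}}\du{\twa{\alpha}}\!\subseteq\!\mathfrak{g}$, note that by \prettyref{lem:cycmin} the $\fcs$ $\du{\twa{\alpha}}$ is contained in every $\fcs$ that contains $\alpha$; since $\mathfrak{g}$ itself is such an $\fcs$ for each $\alpha\!\in\!\mathfrak{g}$, each term $\du{\twa{\alpha}}$ lies below $\mathfrak{g}$ in $\lat$, hence so does their join. Conversely, since $\alpha\!\in\!\du{\twa{\alpha}}$ (again by \prettyref{lem:cycmin}, as $\du{\twa{\alpha}}$ is the smallest $\fcs$ containing $\alpha$), every $\alpha\!\in\!\mathfrak{g}$ belongs to the union, and a fortiori to the join, of the $\du{\twa{\alpha}}$, yielding $\mathfrak{g}\!\subseteq\!\bigvee_{\alpha\in\mathfrak{g}}\du{\twa{\alpha}}$.

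For \prettyref{eq:cycdec2}, I will apply the duality map of $\lat$ to both sides of \prettyref{eq:cycdecomp} and invoke de Morgan's law (extended to arbitrary joins, which is standard in a self-dual lattice), together with the involutivity of the duality map. Since $\twa{\alpha}$ is itself an $\fcs$ by \prettyref{lem:cycfc}, we have $\du{\du{\twa{\alpha}}}\!=\!\twa{\alpha}$, and since the meet in $\lat$ is given by intersection, this produces
\[
\du{\mathfrak{g}}\!=\!\du{\Bigl(\bigvee_{\alpha\in\mathfrak{g}}\du{\twa{\alpha}}\Bigr)}\!=\!\bigcap_{\alpha\in\mathfrak{g}}\du{\du{\twa{\alpha}}}\!=\!\bigcap_{\alpha\in\mathfrak{g}}\twa{\alpha}
\]
as claimed.

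There is no real obstacle: once \prettyref{lem:cycfc} and \prettyref{lem:cycmin} are in hand, both identities follow by very short lattice-theoretic manipulations. The only point requiring care is the appeal to the generalized de Morgan's law for arbitrary (finite) joins, but this is an immediate consequence of its binary version recalled in Section~\ref{sec:Graphs-and} together with the involutivity of the duality map.
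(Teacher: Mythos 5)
Your proposal is correct and follows essentially the same route as the paper, which derives \prettyref{eq:cycdecomp} directly from \prettyref{lem:cycmin} and obtains \prettyref{eq:cycdec2} from it by duality; you have merely spelled out the two inclusions and the de Morgan step that the paper leaves implicit.
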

\begin{proof}
\prettyref{eq:cycdecomp} is a direct consequence of \prettyref{lem:cycmin},
and \prettyref{eq:cycdec2} follows from it by duality.
\end{proof}

Since, according to \prettyref{lem:dualclosed}, any element of an
associated lattice can be obtained as a meet of vertex neighborhoods,
it follows that $\latg{\lgr}$ is a sublattice of $\lat$. Actually,
a much stronger statement is true.
\begin{thm}
\label{thm:declat}The self-dual lattices $\lat$ and $\latg{\lgr}$
coincide.
\end{thm}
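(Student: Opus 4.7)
The plan is to show the two inclusions $\latg{\lgr}\!\subseteq\!\lat$ and $\lat\!\subseteq\!\latg{\lgr}$ separately, and then remark that the lattice structures (order and duality) automatically match because both sides are collections of subsets of the set of primaries ordered by inclusion with the same set-theoretic duality.

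First I would establish $\latg{\lgr}\!\subseteq\!\lat$. By \prettyref{lem:dualclosed} every $X\!\in\!\latg{\lgr}$ is an intersection of vertex-neighborhoods $\lgr\!\left(\alpha\right)\!=\!\twa{\alpha}$, and each $\twa{\alpha}$ is an $\fcs$ by \prettyref{lem:cycfc}. A routine check (the vacuum $\v$ lies in every $\fcs$, and an intersection of fusion-closed sets is fusion-closed) then shows that the intersection is itself an $\fcs$, so $X\!\in\!\lat$.

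For the reverse inclusion, the key observation is that the $\fcs$-theoretic duality map and the graph-theoretic duality map agree on \emph{arbitrary} subsets of the primaries: for any $X$, using \prettyref{eq:cycdec2} and the definition $\twa{\alpha}\!=\!\lgr\!\left(\alpha\right)$,
\[
\du X\!=\!\bigcap_{\alpha\in X}\twa{\alpha}\!=\!\bigcap_{\alpha\in X}\lgr\!\left(\alpha\right)\!=\!\lgr\!\left(X\right).
\]
Given $\mathfrak{g}\!\in\!\lat$, apply this twice: $\cl{\mathfrak{g}}\!=\!\lgr\!\left(\lgr\!\left(\mathfrak{g}\right)\right)\!=\!\du{\du{\mathfrak{g}}}\!=\!\mathfrak{g}$, where the last equality uses the involutivity of the $\fcs$-duality on $\lat$ (which is what self-duality of $\lat$ provides). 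Hence $\mathfrak{g}$ is closed in the graph sense, so $\mathfrak{g}\!\in\!\latg{\lgr}$.

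Finally I would note that the two lattices not only coincide as underlying sets but also as \emph{self-dual} lattices: the order on both is inclusion (hence the joins and meets are uniquely determined), and the identity $\du X\!=\!\lgr\!\left(X\right)$ above shows that their duality maps agree. The only real content is the double-dual identification $\mathfrak{g}\!=\!\du{\du{\mathfrak{g}}}$, which is precisely the self-duality of $\lat$ established in \cite{Bantay2020a} and taken for granted here; everything else is bookkeeping. So the main (conceptual) obstacle is just recognizing that the graph-theoretic duality on $\lgr$ and the $\fcs$-theoretic duality on $\lat$ are literally the same operation when restricted to subsets of primaries — after that the theorem falls out immediately.
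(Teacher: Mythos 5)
Your proposal is correct and follows essentially the same route as the paper: both inclusions are obtained by noting that vertex-neighborhoods $\twa{\alpha}$ are FC sets (so closed sets, being intersections of these, lie in $\lat$), and that the FC duality agrees with the graph duality, so involutivity of the former gives $\cl{\mathfrak{g}}\!=\!\du{\du{\mathfrak{g}}}\!=\!\mathfrak{g}$. The paper phrases the last step via \prettyref{lem:dualclosed} ($\mathfrak{g}\!=\!\twa{\du{\mathfrak{g}}}$ is a vertex-dual, hence closed), but this is the same argument.
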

\begin{proof}
By the above, all elements of $\latg{\lgr}$ belong to $\lat$. Since
both lattices are ordered by inclusion, what we have to show is that,
conversely, every $\mathfrak{g}\!\in\!\lat$ belongs to $\latg{\lgr}$,
and that the respective duality maps coincide, i.e. $\du{\mathfrak{g}}\!=\!\twa{\mathfrak{g}}$
for $\mathfrak{g}\!\in\!\lat$. But $\du{\mathfrak{g}}\!=\!\bigcap_{\alpha\in\mathfrak{g}}\twa{\alpha}\!=\!\twa{\mathfrak{g}}$
by \prettyref{eq:cycdec2}, and because $\du{\mathfrak{g}}\!\in\!\fcl$,
one concludes that $\mathfrak{g}\!=\!\du{\left(\du{\mathfrak{g}}\right)}\!=\!\twa{\du{\mathfrak{g}}}$
belongs to $\latg{\lgr}$, proving the claim. 
\end{proof}
\prettyref{thm:declat} was the main motivation behind the theory
presented before, since it makes available the results about associated
lattices of undirected graphs in the study of the $\decl$ $\fcl$.
Not only does it underline the importance of the locality graph, but
also draws attention onto the so-called 'locality diagram' of the
model (the deflation of its locality graph) and the equilocality classes
of primaries. As an illustration, let's note the following interesting
result.
\begin{lem}
\label{lem:eclint}If the quantum dimension of a primary is a rational
integer, then the same is true for all primaries in its equilocality
class.
\end{lem}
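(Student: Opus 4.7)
The plan is to express the property $\qd{\alpha}\!\in\!\mathbb{Z}$ as a condition that depends only on the vertex-neighborhood $\twa{\alpha}$; the conclusion is then immediate, since equilocality means $\twa{\alpha}\!=\!\twa{\beta}$.

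First I would note that $\qd{\alpha}\!=\!\ch{\v}(\alpha)$ is an eigenvalue of the integer matrix $\fm(\alpha)$, hence an algebraic integer; consequently $\qd{\alpha}\!\in\!\mathbb{Z}$ iff $\qd{\alpha}\!\in\!\mathbb{Q}$, iff $\qd{\alpha}$ is fixed by every element of the Galois group $G\!=\!\Gal{\cyc{N}/\mathbb{Q}}$ of the cyclotomic field containing all entries of the modular data. By the Galois action on modular data (Coste--Gannon), each $\sigma\!\in\!G$ permutes the primaries via some $\hat\sigma$ and introduces a sign $\epsilon_{\sigma}$ with $\sigma(S_{pq})\!=\!\epsilon_{\sigma}(p)\,S_{\hat\sigma(p),q}$; applied to $\qd{\alpha}\!=\!S_{\v\alpha}/S_{\v\v}$ the sign cancels, yielding $\sigma(\qd{\alpha})\!=\!\ch{\hat\sigma(\v)}(\alpha)$.

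Second, the symmetry of mutual locality (from $S_{pq}\!=\!S_{qp}$ together with the Verlinde formula \prettyref{eq:chdef}) takes the form $\ch{p}(q)\!=\!\qd{q}$ iff $\ch{q}(p)\!=\!\qd{p}$, which lets one rewrite the fixed-point condition $\sigma(\qd{\alpha})\!=\!\qd{\alpha}$ as $\hat\sigma(\v)\!\in\!\twa{\alpha}$. Combining the two steps gives the characterization
\[
\qd{\alpha}\!\in\!\mathbb{Z}\iff\set{\hat\sigma(\v)}{\sigma\!\in\!G}\!\subseteq\!\twa{\alpha}
\]
whose right-hand side depends only on $\twa{\alpha}$. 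Since equilocal primaries have identical neighborhoods, the Galois orbit of $\v$ is contained in $\twa{\alpha}$ iff it is contained in $\twa{\beta}$, proving the lemma.

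The only step that is not purely formal is the invocation of the Coste--Gannon Galois action on modular data; I expect this to be the ingredient to flag with a citation rather than a genuine obstacle. Everything else is a rearrangement of the defining formula for $\twa{\alpha}$ together with the symmetry of mutual locality already built into \prettyref{lem:cycfc}.
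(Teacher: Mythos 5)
Your proof is correct, but it takes a genuinely different route from the paper's. The paper disposes of the lemma in two lines: it cites Corollary 11 of \cite{Bantay2020a} for the fact that the set $\ifc$ of primaries with integer quantum dimension is an FC set, then invokes \prettyref{thm:declat} to place $\ifc$ inside $\latg{\lgr}$, and \prettyref{cor:ecldecomp} to conclude that, as an element of the associated lattice, $\ifc$ is a union of equilocality classes. You instead argue directly with the Galois action on the modular data: since $\qd{\alpha}$ is an eigenvalue of an integer matrix, integrality reduces to rationality, and the Coste--Gannon symmetry $\sigma(S_{pq})\!=\!\epsilon_{\sigma}(p)\,S_{\hat{\sigma}(p),q}$ converts the fixed-point condition $\sigma(\qd{\alpha})\!=\!\qd{\alpha}$ into the condition $\hat{\sigma}(\v)\!\in\!\twa{\alpha}$, using the same symmetry of mutual locality that underlies \prettyref{lem:cycfc}. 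Integrality of $\qd{\alpha}$ thereby becomes a function of the neighborhood $\twa{\alpha}$ alone, which is exactly what equilocality preserves. Your route costs one external ingredient (the Galois action on modular data, which indeed needs a citation, e.g.\ in the spirit of \cite{Bantay2003b}), but it buys two things the paper's argument does not: it bypasses \prettyref{thm:declat} entirely, and it yields for free the stronger statement recorded in the Remark immediately following the lemma, since your argument identifies the full stabilizer of $\qd{\alpha}$ in the Galois group as $\set{\sigma}{\hat{\sigma}(\v)\!\in\!\twa{\alpha}}$, so the number field generated by $\qd{\alpha}$ itself is an equilocality-class invariant. The paper's proof, by contrast, is deliberately structured to showcase the identification of the deconstruction lattice with $\latg{\lgr}$ as the working tool.
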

\begin{proof}
According to Corollary 11 of \cite{Bantay2020a}, the set $\ifc\!=\!\set p{\qd p\!\in\!\mathbb{Z}}$
of all primaries with integer quantum dimension is an $\fcs$, hence
it belongs to $\latg{\lgr}$ according to \prettyref{thm:declat}.
Consequently, $\ifc$ is a union of equilocality classes by \prettyref{cor:ecldecomp},
hence $p\!\in\!\ifc$ implies that all primaries in the equilocality
class of $p$ also belongs to $\ifc$.
\end{proof}
\begin{rem}
Let us mention that, using some elementary Galois theory together
with the results of \cite{Bantay2020a}, one can prove the following
generalization of \prettyref{lem:eclint}: the algebraic number field
generated by the quantum dimension of a primary is the same for all
primaries in the same equilocality class.
\end{rem}
As to the locality diagram, i.e. the deflation of the locality graph,
it is clear that its knowledge, together with that of the primary
content of the individual equilocality classes (i.e. the deflation
map) does completely determine the locality graph itself. But the
associated lattice of a graph and of its deflation are isomorphic
according to \prettyref{thm:defliso}, and this means that the locality
diagram does determine the structure (as a self-dual lattice) of the
$\decl$ $\fcl$. In particular, fairly different conformal models
could have isomorphic $\decl$s provided their locality diagrams are
the same. From a practical point, since the locality diagram has usually
much less vertices than the locality graph, these observations lead
to a dramatic simplification in the computation of the $\decl$.

But there is more to all this, as computational evidence suggests
that many different conformal models of similar origin have the same
(up to isomorphism) locality diagram: for example, all unitary Virasoro
minimal models, except for two, share the same 'generic Virasoro diagram',
while for superconformal minimal models one has two such generic diagrams.
The situation gets more complicated for other classes (like parafermionic
or Gepner models), but there is still a clear pattern in the structure
of the diagrams, and this suggests that locality diagrams can provide
a (coarse) classification of conformal models\footnote{Actually, since the notion of mutual locality does rely exclusively
on the modular data of the conformal model, locality diagrams make
sense in the larger context of Modular Tensor Categories \cite{Turaev,Bakalov-Kirillov},
and can be used to classify the latter.}.

Finally, let us note that locality diagrams of conformal models have
a fairly restricted structure, since each 
\begin{enumerate}
\item is irreducible , being the deflation of the locality graph; 
\item has an associated lattice that has to be modular  according to Theorem
2 of \cite{Bantay2020a};
\item has a 'universal' vertex adjacent to every vertex (including itself)
that corresponds to the equilocality class containing the vacuum.
\end{enumerate}
Of all the irreducible graphs of a given size, these criteria select
out a small number of graphs: for example, the number of non-isomorphic
irreducible graphs of a given size satisfying suitable combinations
of these criteria can be read off \prettyref{tab:graphstat}. What
is really interesting is that computational evidence suggests that
only a small fraction of all those graphs that satisfy these criteria
does actually correspond to the locality diagram of some conformal
model. Actually, an even stronger statement seems to be true: there
is a one-to-one correspondence between locality diagrams of conformal
models and their $\decl$s. This observation is indeed intriguing,
as there are in principle several different irreducible graphs with
a given associated lattice satisfying all the above criteria, but
only one of these seems to be realized as the locality diagram of
some suitable conformal model. Clarifying this issue could lead to
a better understanding of conformal models (and Modular Tensor Categories).

\begin{table}
\centering{}%
\begin{tabular}{|c||r|r|r|c|}
\hline 
\# vertices & 1 & 1 \& 2 & 1 \& 2 \& 3 & known locality diagrams \tabularnewline
\hline 
\hline 
3 & 6 & 6 & 3 & 1\tabularnewline
\hline 
4 & 31 & 24 & 8 & 4\tabularnewline
\hline 
5 & 230 & 95 & 27 & 2\tabularnewline
\hline 
6 & 2683 & 439 & 98 & 3\tabularnewline
\hline 
7 & 50922 & 2362 & 443 & 2\tabularnewline
\hline 
\end{tabular}\caption{\label{tab:graphstat}Statistics of connected graphs with a given
number of vertices satisfying different combinations of the criteria
for locality diagrams, namely criterion 1 (irreducibility), criterion
2 (modularity of the associated lattice) and criterion 3 (existence
of universal vertex), the last column giving the number of known locality
diagrams of the given size.}
\end{table}

Actually, there is a fourth, more subtle criterion for the realizability
of an irreducible graph $\rel$ as the locality diagram of some conformal
model, that follows from Lemma 9 of \cite{Bantay2020a}: there should
exits a positive function $\boldsymbol{\mu}$ on the set of vertices
(in case of locality diagrams, $\boldsymbol{\mu}$ equals the sum
of dimensions squared of the primaries in the corresponding equilocality
class) such that the product
\[
\left(\sum_{x\in X}\boldsymbol{\mu}\!\left(x\right)\right)\left(\sum_{y\in\rel(X)}\boldsymbol{\mu}\!\left(y\right)\right)
\]
is the same for every $X\!\in\!\latg{\rel}$. This is equivalent to
the existence of a strictly positive solution for a set of quadratic
equations determined by the graph, and that this criterion is meaningful
is exemplified by the graphs depicted on \prettyref{fig:non-pos},
which satisfy all the above criteria except for this last one. This
allows to effectively reduce the number of allowed diagrams with 3
vertices from 3 to 1, and from 8 to 5 in case of 4 vertices. Unfortunately,
this 'positivity' criterion can be prohibitively difficult to check
for graphs with 5 or more vertices.

\begin{figure}
~~~~~~~~~~~~~~\begin{tikzpicture}[auto,node distance=1.5cm,
every loop/.style={}, thick,null/.style={coordinate},
main/.style={circle,draw,fill,inner sep=3 pt}]
\node[main] (1) {};
\node[null] (3) [below  of=1] {};
\node [main] (0) [left of=3] {};   
\node[main] (2) [right of=3] {};

\path[every node/.style={font=\sffamily\small}]        
(0) edge node {} (1)
edge  node {} (2)
(1) edge node {} (2)
(2) edge [out=50, in=310, loop] node {} (2);
\end{tikzpicture}~~~~~~~~~~~~~~~~\begin{tikzpicture}[auto,node distance=1.5cm,
every loop/.style={}, thick,null/.style={coordinate},
ess/.style={circle,draw,fill,inner sep=3 pt}]
\node [ess] (0) {}; 
\node[ess] (1)  [below of=0] {};    
\node[ess] (4) [left of=1] {};
\node[ess] (5) [right of=1] {}; 
\path[every node/.style={font=\sffamily\small}]        
(1) edge node {} (0)
edge node {} (4)
edge node {} (5)
(0) edge [loop] node {} (0)
edge  node  {} (4)
edge  node {} (5)
(5) edge [out=50, in=310, loop] node {} (5); 
\end{tikzpicture}\caption{\label{fig:non-pos}Examples of irreducible graphs satisfying the
first three criteria, but not the fourth (positivity) criterion. }
\end{figure}
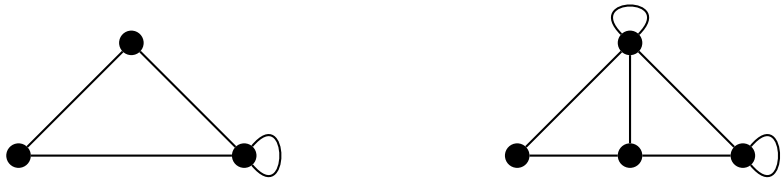

An interesting numerical attribute of locality diagrams is their 'dimension',
the length of the longest chain between the minimal and the maximal
elements of their associated lattices, whose existence is guaranteed
by the modularity of the latter. It follows from \prettyref{lem:dimbound}
that the dimension cannot exceed the number of essential vertices,
hence it can be quite small even for large irreducible graphs. In
particular, there are infinitely many locality diagrams of dimension
2, corresponding to holomorphic orbifolds whose twist group is cyclic
of prime order.

\section{Summary and outlook}

As we have seen, there is an intimate relationship between undirected
graphs (with loops) and self-dual lattices: each graph determines
a lattice, its associated lattice, and conversely, to each self-dual
lattice corresponds a graph, its duality graph, whose associated lattice
is the original one. While several (as a matter of fact, infinitely
many) different graphs share the same associated lattice, one has
a nice description of all these using \prettyref{thm:defliso}, \prettyref{lem:dugrsub}
and \prettyref{thm:isgisom}. The utility of this for physics stems
from the observation that the $\decl$ of a conformal model, of prime
interest for orbifold deconstruction, is nothing but the associated
lattice of the locality graph of the model, making available graph-theoretic
tools in the study of the $\decl$. Not only does one get access to
graph theory as a powerful tool, but this connection brings to front
such new concepts as locality diagrams and equilocality classes in
the description of conformal models (and more generally, of Modular
Tensor Categories).

There are many interesting questions related to this circle of ideas
that merit further elaboration. An obvious one is to understand the
pattern that governs the structure of the locality diagrams of models
from some specified class, e.g. parafermionic or superconformal. Closely
related to the previous question is to explain why there seems to
be a unique locality diagram for all conformal models with the same
deconstruction lattice. Another problem is related to the study of
the possible degenerations of locality diagrams: for example, in case
of unitary Virasoro minimal models, all diagrams but two are isomorphic
with each other, but even the two exceptions may be understood as
degenerations of the generic diagram, due to the fact that the corresponding
models have too few primaries to effectively fill all the equilocality
classes of the latter. Similar phenomena show up in all cases known
to us, making it an interesting point to describe the different possible
degenerations of a given diagram. There is also the question of what
are common the attributes of primaries from the same equilocality
class, and to which extent do locality diagrams offer a meaningful
classification scheme for conformal models. We strongly believe that
the study of these questions could lead to important new developments
in the field.

\bibliographystyle{plain}

\end{document}